\providecommand{\U}[1]{\protect\rule{.1in}{.1in}}
\newtheorem{corollary}[theorem]{Corollary}
\newtheorem{proposition}[theorem]{Proposition}
\begin{document}

\title{A 3-Stranded Quantum Algorithm for the Jones Polynomial}
\author{Louis H. Kauffman\supit{a} and Samuel J. Lomonaco, Jr\supit{b}
\skiplinehalf\supit{a}Department of Mathematics, Statistics, and Computer
Science, 851 South Morgan Street, University of Illinois at Chicago,
Chicago,IL \ 60607-7045, USA\\\supit{b}Department of Computer Science and Electrical Engineering, 1000
Hilltop Circle, University of Maryland Baltimore County (UMBC), Baltimore, MD
\ 21250, USA }
\authorinfo{L.H.K.: Email: kauffman@uic.edu, S.J.L.: Email: lomonaco@umbc.edu}
\authorinfo{L.H.K.: E-mail: kauffman@uic.edu,}
\authorinfo{L.H.K.: E-mail: kauffman@uic.edu, S.J.L.: E-mail: lomonaco@umbc.edu}
\maketitle

\begin{abstract}
Let $K$ be a 3-stranded knot (or link), and let $L$ denote the number of
crossings in $K$. \ Let $\epsilon_{1}$ and $\epsilon_{2}$ be two positive real
numbers such that $\epsilon_{2}\leq1$. \ 

In this paper, we create two algorithms for computing the value of the Jones
polynomial $V_{K}\left(  t\right)  $ at all points $t=\exp\left(
i\varphi\right)  $ of the unit circle in the complex plane such that
$\left\vert \varphi\right\vert \leq2\pi/3$. \ 

The first algorithm, called the \textbf{classical 3-stranded braid (3-SB)
algorithm}, is a classical deterministic algorithm that has time complexity
$O\left(  L\right)  $. \ The second, called the \textbf{quantum 3-SB
algorithm}, is a quantum algorithm that computes an estimate of $V_{K}\left(
\exp\left(  i\varphi\right)  \right)  $ within a precision of $\epsilon_{1}$
with a probability of success bounded below by $1-\epsilon_{2}$. \ The
\textbf{execution time complexity} of this algorithm is $O\left(  nL\right)
$, where $n$ is the ceiling function of $\ \left(  \ln\left(  4/\epsilon
_{2}\right)  \right)  /2\epsilon_{1}^{2}$. \ The \textbf{compilation time
complexity}, i.e., an asymptotic measure of the amount of time to assemble the
hardware that executes the algorithm, is $O\left(  L\right)  $.

\end{abstract}

\section{Introduction}

\bigskip

Let $K$ be a 3-stranded knot (or link), i.e., a knot formed by the closure
$\overline{b}$ of a 3-stranded braid $b$, i.e., a braid $b\in B_{3}$. \ Let
$L$ be the length of the braid word $b$, i.e., the number of crossings in the
knot (or link) $K$. \ Let $\epsilon_{1}$ and $\epsilon_{2}$ be two positive
real numbers such that $\epsilon_{2}\leq1$. \ 

\bigskip

In this paper, we create two algorithms for computing the value of the Jones
polynomial $V_{K}\left(  t\right)  $ at all points $t=e^{i\varphi}$ of the
unit circle in the complex plane such that $\left\vert \varphi\right\vert
\leq\frac{2\pi}{3}$. \ 

\bigskip

The first algorithm, called the \textbf{classical 3-stranded braid (3-SB)
algorithm}, is a classical deterministic algorithm that has time complexity
$O\left(  L\right)  $. \ The second, called the \textbf{quantum 3-SB
algorithm}, is a quantum algorithm that computes an estimate of $V_{K}\left(
e^{i\varphi}\right)  $ within a precision of $\epsilon_{1}$ with a probability
of success bounded below by $1-\epsilon_{2}$. \ The \textbf{execution time
complexity} of this algorithm is $O\left(  nL\right)  $, where $n$ is the
ceiling function of $\ \frac{\ln\left(  4/\epsilon_{2}\right)  }{2\epsilon
_{1}^{2}}$. \ The \textbf{compilation time complexity}, i.e., an asymptotic
measure of the amount of time to assemble the hardware that executes the
algorithm, is $O\left(  L\right)  $. \ 

\bigskip

\section{The braid group}

\bigskip

The the $n$\textbf{-stranded braid group} $B_{n}$ is the group generated by
the symbols
\[
b_{1}\text{, }b_{2}\text{, }\ldots\text{ , }b_{n-1}%
\]
subject to the following complete set of defining relations%
\[
\left\{
\begin{array}
[c]{ll}%
b_{i}b_{j}=b_{j}b_{i} & \text{for }\left\vert i-j\right\vert >1\\
& \\
b_{i}b_{i+1}b_{i}=b_{i+1}b_{i}b_{i+1} & \text{for }1\leq i<n
\end{array}
\right.
\]

\bigskip

This group can be described more informally in terms of diagrammatics as
follows: \ We think of each braid as a hatbox with $n$ black dots on top and
another $n$ on the bottom, and with each top black dot connected by a red
string (i.e., a strand) to a bottom black dot. \ The strands are neither
permitted to intersect nor to touch. \ Two such hatboxes (i.e., braids) are
said to be equal if it is possible to continuously transform the strands of
one braid into those of the other, without leaving the hatbox, without cutting
and reconnecting the strands, and without permitting one strand to pass
through or touch another. \ The product of two braids $b$ and $b^{\prime}$ is
defined by simply stacking the hatbox $b$ on top of the hatbox $b^{\prime}$,
thereby producing a new braid \ $b\cdot b^{\prime}$. \ Please refer to Figure
1. \ The generators $b_{i}$ are illustrated in Figure 2. \ Moreover, the
defining relations for the braid group $B_{n}$ are shown in Figures 3.\ \ The
reader should take care to note that the hatbox is frequently not drawn, but
is nonetheless understood to be there. \ 

\bigskip%

\begin{center}
\includegraphics[
height=1.6336in,
width=2.8824in
]%
{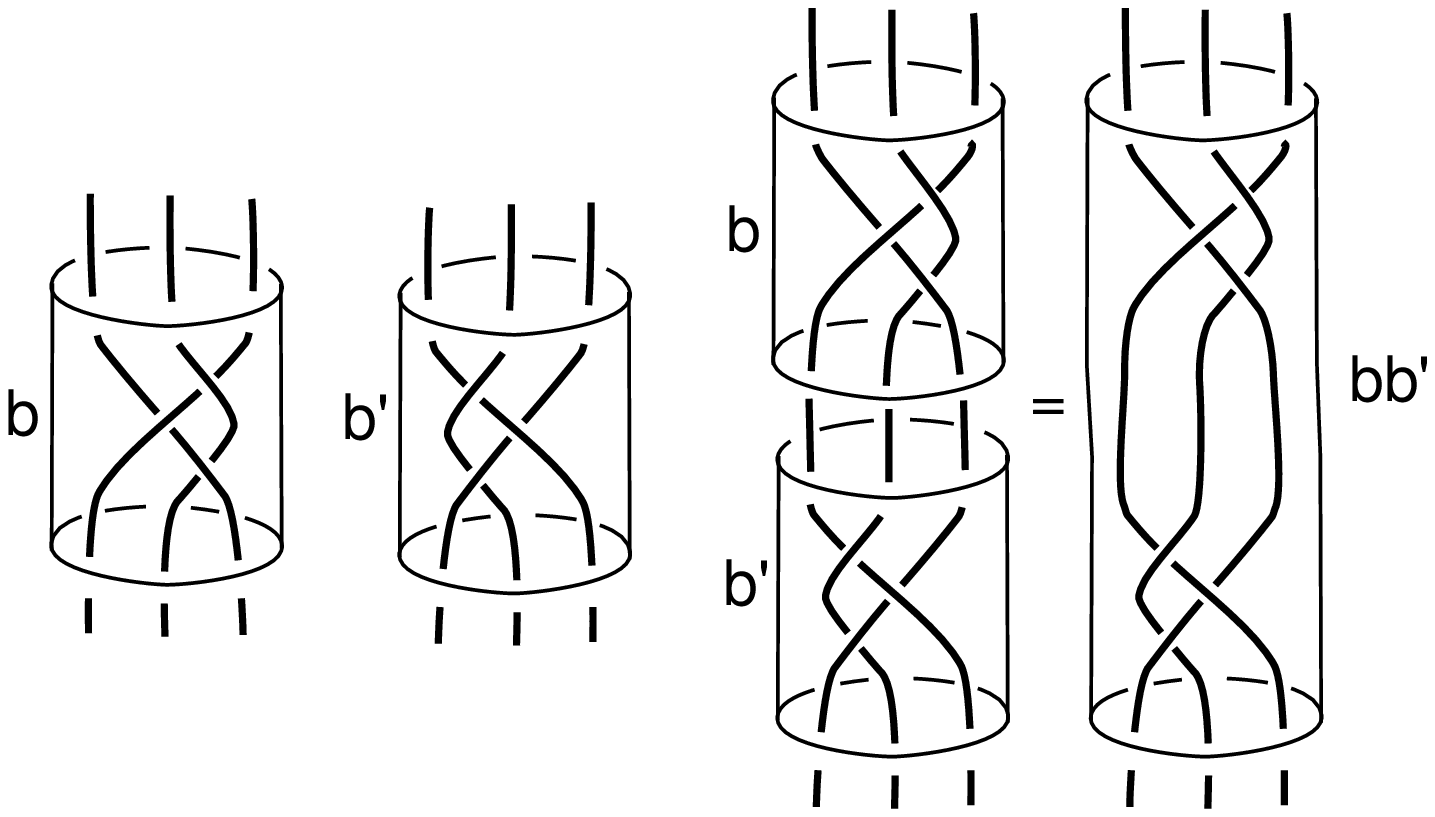}%
\\
\textbf{Figure 1. The product of two braids}%
\end{center}

\bigskip%

\begin{center}
\includegraphics[
height=0.8977in,
width=1.9501in
]%
{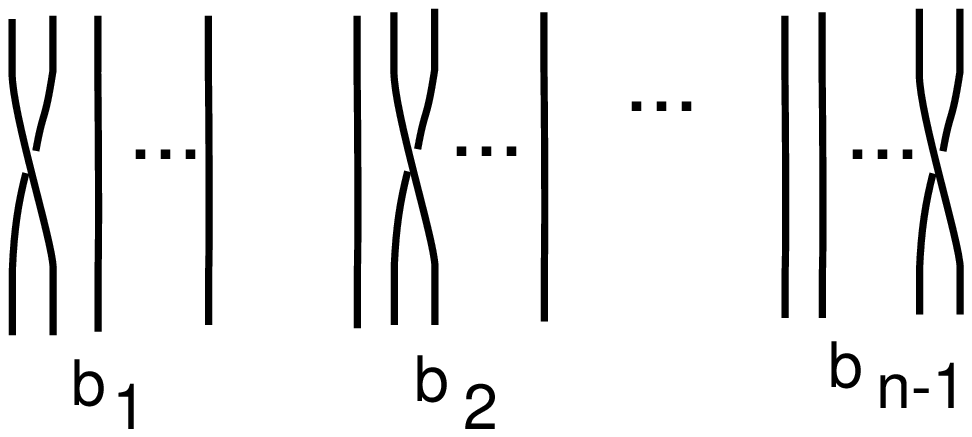}%
\\
\textbf{Figure 2. \ The generators of the }$n$-stranded \textbf{braid group
}$B_{n}$.
\end{center}

\bigskip%

\begin{center}
\includegraphics[
height=1.6708in,
width=2.418in
]%
{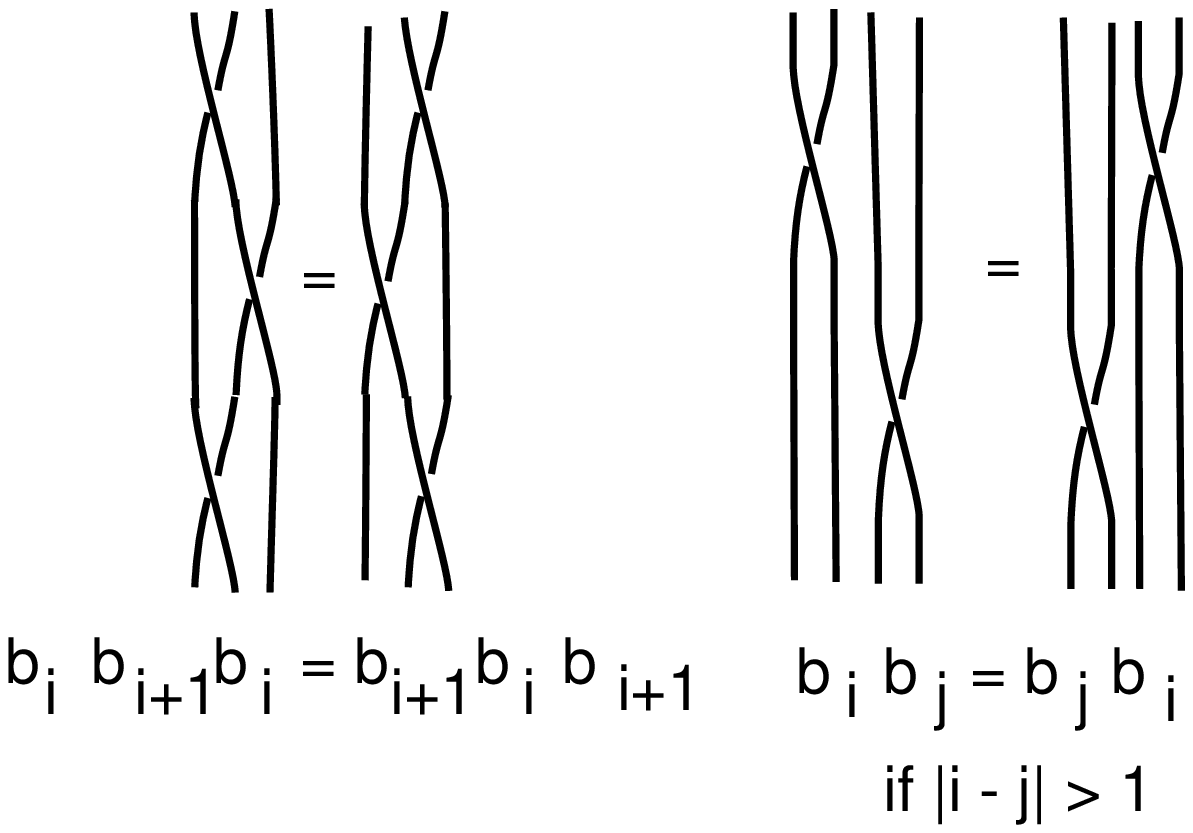}%
\\
\textbf{Figure 3. \ A complete set of defining relations for the braid group
}$B_{n}$.
\end{center}

\bigskip

Every braid $b$ in the braid group $B_{n}$ can be written as a product of
braid generators $b_{1}$, $b_{2}$, $\ldots$ , $b_{n-1}$ and their inverses
$b_{1}^{-1}$, $b_{2}^{-1}$, $\ldots$ , $b_{n-1}^{-1}$, i.e., every braid $b$
can be written in the form%
\[
b=%
{\displaystyle\prod\limits_{i=1}^{L}}
b_{j\left(  i\right)  }^{\epsilon\left(  i\right)  }=b_{j\left(  1\right)
}^{\epsilon\left(  1\right)  }b_{j\left(  2\right)  }^{\epsilon\left(
2\right)  }\cdots b_{j\left(  L\right)  }^{\epsilon\left(  L\right)  }\text{
,}%
\]
where $\epsilon\left(  i\right)  =\pm1$. \ We call such a product a
\textbf{braid word}. \ 

\bigskip

\noindent\textbf{Remark.} \textit{We will later see that each such braid word
can be thought of as a computer program which is to be compiled into an
executable program. This resulting compiled program will in turn be executed
to produce an approximation of the value of the Jones polynomial }%
$J_{K}\left(  t\right)  $\textit{ at a chosen point }$e^{i\varphi}$\textit{ on
the unit circle.}

\bigskip

We define

\bigskip

\begin{definition}
The \textbf{writhe} of a braid $b$, written $Writhe(b)$, is defined as the sum
of the exponents of a braid word representing the braid. \ In other words,
\[
Writhe\left(
{\displaystyle\prod\limits_{i=1}^{L}}
b_{j\left(  i\right)  }^{\epsilon\left(  i\right)  }\right)  =%
{\displaystyle\sum\limits_{i=1}^{L}}
\epsilon\left(  i\right)
\]

\end{definition}

\bigskip

For readers interested in learning more about the braid group, we refer the
reader to Emil Artin's original defining papers\cite{Artin1},\ \cite{Artin2}
\ \cite{Artin3} as well as to the many books on braids and knot theory, such
as for example\cite{Birman1}. \cite{Crowell1} \ \cite{Kauffman1}
\ \cite{Murasugi1}

\bigskip

\section{How knots and braids are related}

\bigskip

As one might suspect, knots and braids are very closely related to one
another. \ 

\bigskip

Every braid $b$ can be be transformed into a knot $K$ by forming the
\textbf{closed braid} $\overline{b}$ as shown in Figure 4. \ \
\begin{center}
\includegraphics[
height=1.6751in,
width=1.4641in
]%
{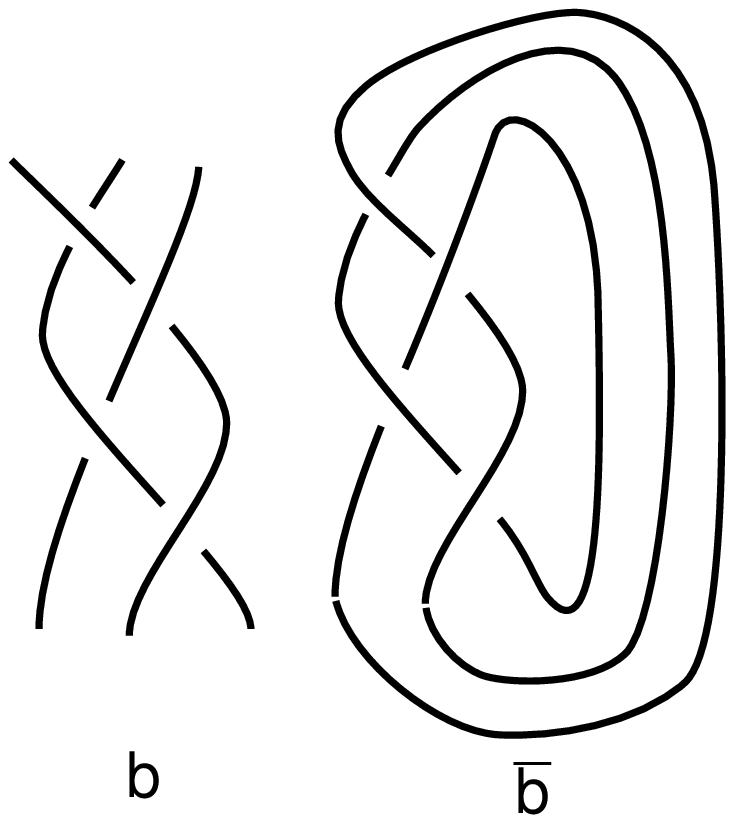}%
\\
Figure 4. A braid $b$ together with its closure $\overline{b}$.
\end{center}
This process can also be reversed. \ For Alexander developed a polytime
algorithm for transforming an arbitrary knot $K$ into a braid $b$ having $K$
as its closure.

\bigskip

\begin{theorem}
[Alexander]Every knot (or link) is the closure of a braid. \ Such a braid can
be found by a polynomial time algorithm
\end{theorem}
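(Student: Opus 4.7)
The plan is to give a constructive proof by describing Alexander's original algorithm. Starting from a planar diagram of the knot or link $K$ with an orientation, I would first pick a basepoint $p$ in the plane (off the diagram) to serve as the intended braid axis, and subdivide each strand of the diagram into short arcs between successive crossings. Each such arc has a well-defined orientation relative to $p$: it either proceeds counterclockwise around $p$ (call it a \emph{good} arc) or clockwise (a \emph{bad} arc). If every arc is good, the diagram is already a closed braid with axis $p$, so the problem reduces to eliminating the bad arcs.

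The key step is the \emph{Alexander trick}: take any bad arc, cut it at an interior point, and drag the two resulting halves across the basepoint $p$, one over and one under all other strands in the way. After the move, the pieces together form a long detour that winds around $p$ in the good direction, while every other arc is unchanged. This strictly decreases the number of bad arcs, so iterating the trick terminates in a bounded number of steps. At that point every arc winds the same way around $p$, and the diagram is by definition a closed braid, yielding a braid $b\in B_{n}$ with $\overline{b}=K$.

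To obtain the polynomial-time bound, I would count carefully: the initial number of arcs is at most $2L$, so at most $O(L)$ Alexander tricks are needed. Each trick can be implemented in time polynomial in $L$ (identify the bad arc, list the strands it must cross, introduce the corresponding new crossings), and each adds at most $O(L)$ new crossings, keeping the diagram size polynomial throughout. The main obstacle is making this bound fully explicit and controlling the growth of the diagram under repeated tricks, since naive iteration can look superpolynomial; for this I would appeal to the refinements due to Vogel and Yamada, which replace the raw Alexander trick by reducing moves on Seifert circles and give a transparent polynomial bound on both the length of the output braid word and the time required to produce it.
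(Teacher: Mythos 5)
The paper does not actually prove this theorem: it is stated as a classical result of Alexander, with the construction of the closure $\overline{b}$ illustrated in Figure 4 and the reader referred to the standard knot-theory literature. So there is no in-paper argument to match yours against; what you have written is the standard Alexander-trick proof, and it is essentially correct as a sketch. Two points deserve tightening. First, when you throw a bad arc over the axis $p$, the new crossings subdivide the \emph{other} arcs of the diagram, so it is not quite true that "every other arc is unchanged"; what saves the induction is that each sub-arc inherits the winding sense of its parent, so the count of bad arcs elsewhere does not increase while the thrown arc is replaced entirely by good arcs. You should also note that the initial subdivision must be fine enough that each arc has a constant winding sense relative to $p$ (a generic choice of $p$ for a polygonal diagram with $O(L)$ segments gives $O(L)$ such arcs), since the $2L$ edges between crossings need not individually be monotone around $p$. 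Second, you are right that the naive iteration does not obviously stay polynomial, and deferring the explicit bound to the Vogel--Yamada refinement (reducing moves on Seifert circles, with a quadratic bound on the resulting braid word) is the correct and standard way to justify the "polynomial time" clause; for the purposes of this paper, which only uses the theorem as a reduction from knots to braid words, that level of detail is more than the authors themselves supply.
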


\bigskip

\noindent\textbf{Remark.} \textit{Every gardener who neatly puts away his
garden hose should no doubt be familiar with this algorithm.}

\bigskip

We should mention that it is possible that the closures of two different
braids will produce the same knot. \ But this non-uniqueness is well understood.

\bigskip

\begin{theorem}
[Markov]Two braids under braid closure produce the same knot (or link) if and
only if one can be transformed into the other by applying a finite sequence of
Markov moves
\end{theorem}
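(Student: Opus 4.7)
The plan is to treat the ``only if'' and ``if'' directions separately, with the first being essentially a picture check and the second being the substantive content. Throughout I will assume that the two Markov moves are (M1) conjugation $b \mapsto a b a^{-1}$ inside a fixed $B_n$ and (M2) stabilization $b \in B_n \mapsto b \cdot b_n^{\pm 1} \in B_{n+1}$, with inverses of (M2) being destabilization.

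\textbf{Easy direction (Markov moves preserve the closed braid).} First I would verify that each move leaves $\overline{b}$ unchanged as a link up to ambient isotopy in $S^3$. For (M1) the closure $\overline{a b a^{-1}}$ stacks $a$, $b$, $a^{-1}$ inside one hatbox; the closure arcs on the outside allow one to slide the top $a$ block around to meet the bottom $a^{-1}$ block, whereupon $a a^{-1} = 1$ erases both and leaves $\overline{b}$. For (M2) the added generator $b_n^{\pm 1}$ acts on a fresh $(n{+}1)$-st strand that takes part in no other crossing, so in the closure it contributes a single isolated kink, i.e.\ a Reidemeister~I configuration, which is ambient-isotopic to a straight arc. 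Both verifications are diagrammatic and elementary.

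\textbf{Hard direction (isotopic closures are Markov equivalent).} For the converse I would combine Alexander's theorem (stated above) with Reidemeister's theorem: if $\overline{b_1}$ and $\overline{b_2}$ are isotopic as links in $S^3$, then their planar diagrams differ by a finite sequence of planar isotopies and Reidemeister moves $R_I$, $R_{II}$, $R_{III}$. My strategy is to fix a distinguished braid axis in $S^3$ about which the strands of every closed braid wind monotonically, then propagate the Reidemeister sequence through the space of diagrams while arranging that every intermediate diagram is again a closed braid about this axis, at the cost of inserting extra steps that are themselves Markov moves on the underlying braid word.

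\textbf{The main obstacle.} The real difficulty, and the heart of the proof, is that a single Reidemeister move applied inside a closed braid can temporarily destroy monotone winding about the axis, so no naive dictionary between Reidemeister moves and Markov moves exists. I would follow the classical approach due to Birman (later streamlined by Morton and by Traczyk): after each local Reidemeister move, re-apply Alexander's gardener-hose procedure to re-braid the perturbed diagram, and then verify by a finite case analysis --- one case per Reidemeister type and per way in which the involved arcs thread across the axis --- that the net change on the braid word is a product of (M1) conjugations and (M2) stabilizations or their inverses. The combinatorial core is a \emph{threading lemma} that bounds how many stabilizations are needed when an arc momentarily crosses the axis in the wrong sense; once this lemma is in hand, every other step is bookkeeping. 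A cleaner alternative would be the Lambropoulou--Rourke $L$-move calculus, which packages all threading into a single elementary move and then reduces $L$-moves to Markov moves. In either formulation, the threading/re-braiding step is where essentially all of the work lives, and it is where I would expect to spend the bulk of the proof.
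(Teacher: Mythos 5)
The paper does not actually prove this statement: Markov's theorem is quoted as a classical result, the Markov moves are deliberately left undescribed, and the reader is referred to the knot theory literature. So there is no in-paper proof to compare against, and your proposal should be judged as a standalone argument.

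Your easy direction is correct and essentially complete: conjugation is absorbed by sliding the $a$ block around the closure arcs, and stabilization closes to a Reidemeister~I kink on an otherwise uninvolved strand. The hard direction, however, is a roadmap rather than a proof. You correctly identify the obstacle (a Reidemeister move inside a closed braid can break monotone winding about the axis), you correctly name the known resolutions (Birman's original argument, Morton's threading, Traczyk, the Lambropoulou--Rourke $L$-move calculus), and you correctly locate where the work lives --- but the threading lemma is only asserted, the case analysis is not carried out, and the reduction of $L$-moves to Markov moves is not shown. Since you yourself state that ``essentially all of the work'' resides in exactly these steps, the proposal as written has a genuine gap: the substantive half of the theorem is outsourced to citations. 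To be fair, this is precisely the gap the paper itself leaves by declining to prove the theorem, and your sketch is an accurate account of how the literature fills it; but it is not itself a proof.
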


\bigskip

We will not describe the Markov moves in this paper. \ For the reader
interested in learning more about these moves, we suggest any one of the many
books on knot theory.\cite{Kauffman1} \ \cite{Murasugi1}

\bigskip

\section{The Temperley-Lieb algebra}

\bigskip

Let $d$ and $A$ be indeterminate complex numbers such that $d=-A^{2}-A^{-2}$,
and let
\[
\mathbb{Z}\left[  A,A^{-1}\right]
\]
be the ring of Laurent polynomials with integer coefficients in the
indeterminate $A$. \ Then the \textbf{Temperley-Lieb algebra} $TL_{n}\left(
d\right)  $ is the algebra with identity $1$ over the Laurent ring
$\mathbb{Z}\left[  A,A^{-1}\right]  $ generated by%
\[
1,U_{1},U_{2},\ldots,U_{n-1}%
\]
subject to the following complete set of defining relations%
\[
\left\{
\begin{array}
[c]{lc}%
U_{i}U_{j}=U_{j}U_{i} & \text{for }\left\vert i-j\right\vert >1\\
& \\
U_{i}U_{i\pm1}U_{i}=U_{i} & \\
& \\
U_{i}^{2}=dU_{i} &
\end{array}
\right.
\]

\bigskip

This algebra can be described more informally in much the same fashion as we
did for the braid group: \ We think of the generators $1,U_{1},U_{2}%
,\ldots,U_{n-1}$ as rectangles with $n$ top and $n$ bottom black dots, and
with $n$ disjoint red strings (i.e., strands) connecting distinct pairs of
black points. \ The red strings are neither permitted to intersect nor to
touch one another. However, they are now allowed to connect two top black dots
or two bottom black dots, as well as connect a top black dot with a bottom
black dot.\ \ The generators $1,U_{1},U_{2},\ldots,U_{n-1}$ of the
Temperley-Lieb algebra $T_{n}(d)$ are shown in Figure 5. \ The reader should
take care to note that the rectangle is frequently not drawn, but is
nonetheless understood to be there.%
\begin{center}
\includegraphics[
height=0.7801in,
width=3.141in
]%
{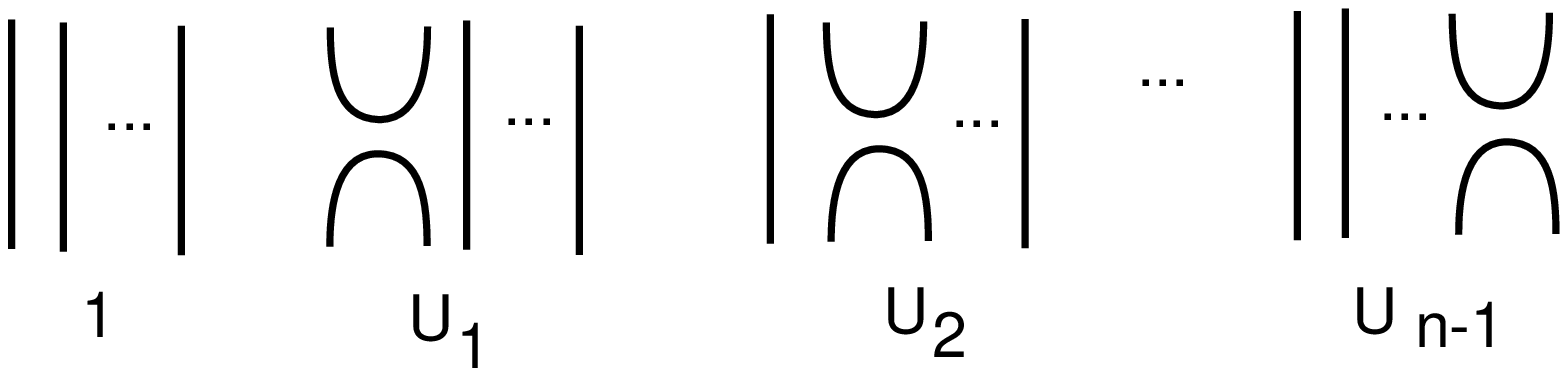}%
\\
\textbf{Figure 5. \ The generators of the Temperley-Lieb algebra }$TL_{n}%
(d)$\textbf{.}%
\end{center}

\bigskip

As we did with braids, the product `$\cdot$' of two such red stringed
rectangles is defined simply by stacking one rectangle on top of another.
\ However, unlike the braid group, there is one additional ingredient in the
definition of the product. Each disjoint circle resulting from this process is
removed from the rectangle, and replaced by multiplying the rectangle by the
indeterminate $d$. In this way, we can construct all the\ red stringed boxes
corresponding to all possible finite products of the generators $1,U_{1}%
,U_{2},\ldots,U_{n-1}$. \ As before, two such red stringed rectangles are said
to be equal if it is possible to continuously transform the strands of one
rectangle into those of the other, without leaving the rectangle, without
cutting and reconnecting the strands, and without letting one strand pass
through another. \ Please refer to Figure 6. \
\begin{center}
\includegraphics[
height=2.143in,
width=2.8349in
]%
{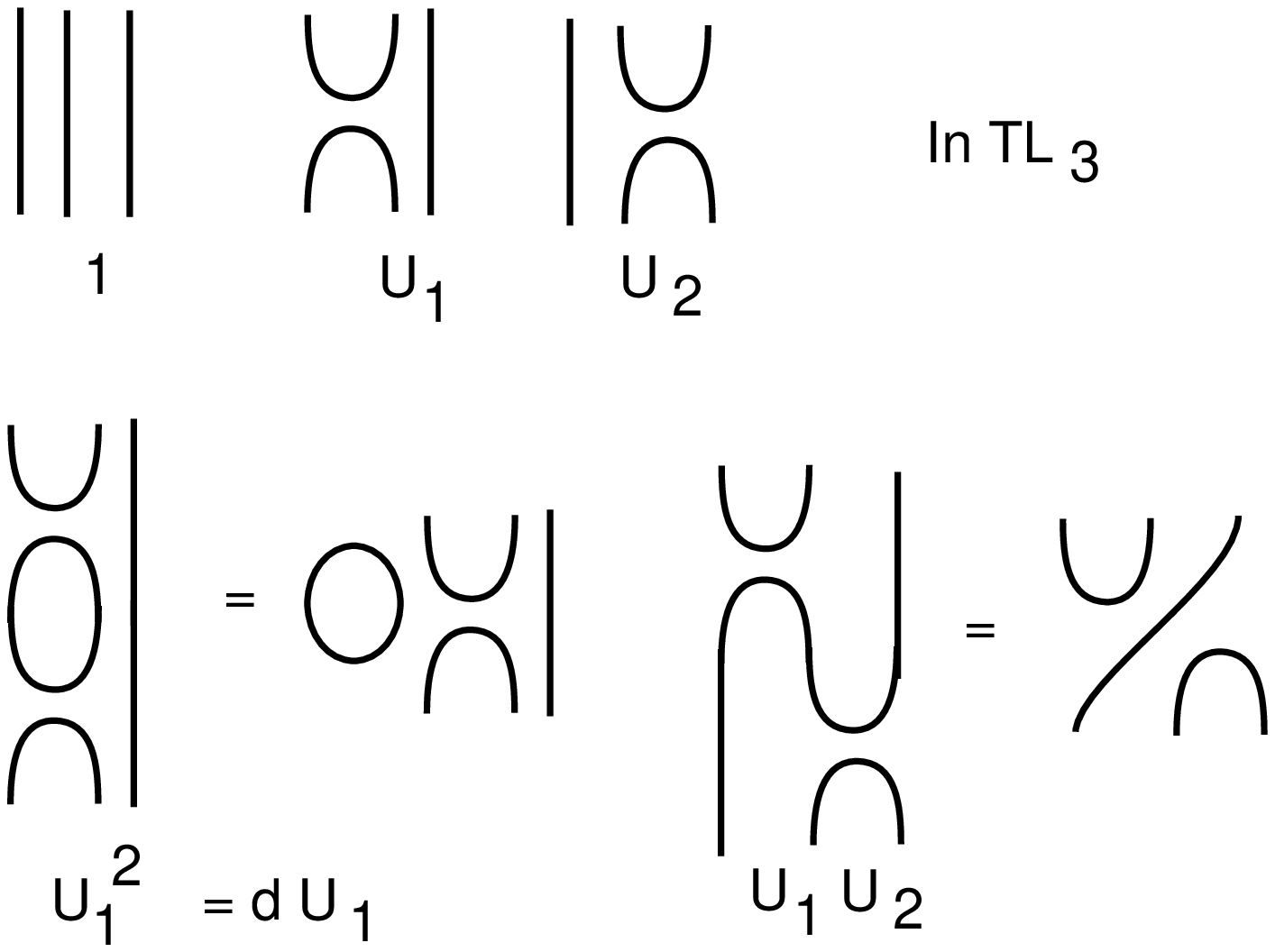}%
\\
\textbf{Figure 6. \ Two examples of the product of Temperley-Lieb generators.}%
\end{center}

\bigskip

Since $TL_{n}\left(  d\right)  $ is an algebra, we also need to define what is
meant by the sum `$+$' (linear combination) of two or more rectangles. \ This
is done simply by formally writing down linear combinations of rectangles over
the Laurent ring $\mathbb{Z}\left[  A,A^{-1}\right]  $, and then assuming that
addition `$+$' distributes with respect to the product `$\cdot$', and that the
scalar elements, i.e., the elements of the Laurent ring $\mathbb{Z}\left[
A,A^{-1}\right]  $, commute with all the rectangles and all the formal linear
combinations of these rectangles. An example of one such linear combination
is,%
\[
\left(  2A^{2}-3A^{-4}\right)  1+\left(  -5+7A^{2}\right)  U_{1}+\left(
1+A^{-6}-A^{-10}\right)  U_{1}U_{2}\text{ ,}%
\]

\bigskip

We should also mention that there exists a trace
\[
Tr_{M}:TL_{n}\left(  d\right)  \longrightarrow\mathbb{Z}\left[  A,A^{-1}%
\right]  \text{ ,}%
\]
called the \textbf{Markov trace}, from the Temperley-Lieb algebra
$TL_{n}\left(  d\right)  $ into the Laurent ring $\mathbb{Z}\left[
A,A^{-1}\right]  $. \ This trace is defined by sending each rectangle to
$d^{k-1}$, where $k$ denotes the number of disjoint circles that occur when
the closure of the rectangle is taken as indicated in Fig. 7.\bigskip

For readers interested in learning more about the Temperley-Lieb algebra
$TL_{n}\left(  d\right)  $, we refer them to the many books on knot theory,
such as for example\cite{Kauffman1}. \cite{Kauffman2}%
\begin{center}
\includegraphics[
height=3.1877in,
width=2.7812in
]%
{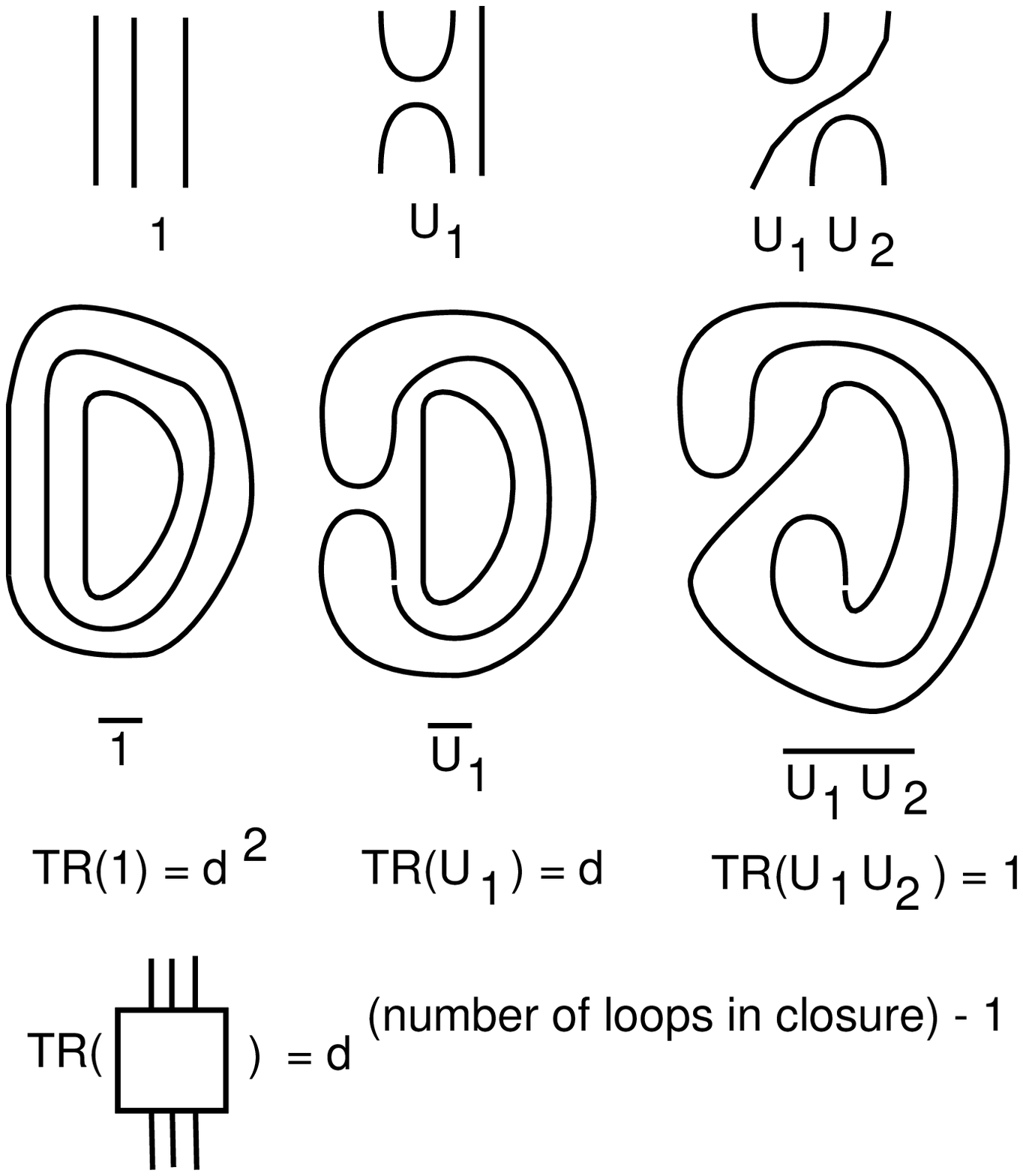}%
\\
\textbf{Figure 7. \ The Markov trace }$Tr_{M}:TL_{n}(d)\longrightarrow
Z\left[  A,A^{-1}\right]  $\textbf{.}%
\end{center}

\bigskip

\section{The Jones Representation}

\bigskip

Vaughn Jones, using purely algebraic methods, constructed his \textbf{Jones
representation}
\[
J:B_{n}\longrightarrow TL_{n}\left(  d\right)
\]
of the braid group $B_{n}$ into the Temperley-Lieb algebra $TL_{n}\left(
n\right)  $ by mapping each braid generator $b_{i}$ and its inverse
$b_{i}^{-1}$ into $TL_{n}\left(  d\right)  $ as follows\footnote{Actually to
be perfectly correct, Jones wrote his original representation in a variable
$t$ which is related to our variable $A$ by the equation $t=A^{-4}$.}%
\[
\left\{
\begin{array}
[c]{ccc}%
b_{i} & \longmapsto & A1+A^{-1}U_{i}\\
&  & \\
b_{i}^{-1} & \longmapsto & A^{-1}1+AU_{i}%
\end{array}
\right.
\]
\ He then used his representation $J$ and the Markov trace $Tr_{M}$ to
construct the \textbf{Jones polynomial} $V\left(  t\right)  $ of a knot $K$
(given by the closure $\overline{b}$ of a braid $b$) as
\[
V\left(  t\right)  =\left(  -A^{3}\right)  ^{Writhe(b)}Tr_{M}\left(  J\left(
b\right)  \right)
\]
where $t=A^{-4}$.

\bigskip

Later, Kauffman created the now well known diagrammatic approach to the
Temperley-Lieb algebra $TL_{n}\left(  d\right)  $ and showed that his
\textbf{bracket polynomial} $\left\langle \overline{b}\right\rangle $ was
intimately connected to the Jones polynomial via the formula
\[
\left\langle \overline{b}\right\rangle =Tr_{M}\left(  J\left(  \overline
{b}\right)  \right)
\]

\bigskip

For readers interested in learning more about these topic, we refer them to
the many books on knot theory, such as for example\cite{Jones1}. \cite{Jones2}
\ \cite{Kauffman1} \ \cite{Kauffman2} \ \cite{Murasugi1}

\bigskip

\section{The Temperley-Lieb algebra $TL_{3}\left(  d\right)  $}

\bigskip

We now describe a method for creating degree two representations of the
Temperley-Lieb algebra $TL_{3}\left(  d\right)  $. \ These representation will
in turn be used to create a unitary representation of the braid group $B_{3}$,
and ultimately be used to construct a quantum algorithm for computing
approximations of the values of the Jones polynomial on a large portion of the
unit circle in the complex plane.

\bigskip

From a previous section of this paper, we know that the 3 stranded
Temperley-Lieb algebra $TL_{3}\left(  d\right)  $ is generated by
\[
1,U_{1},U_{2}%
\]
with the complete set of defining relations given by
\[
\left\{
\begin{array}
[c]{lcl}%
U_{1}^{2}=dU_{1} & \text{ \ \ and \ \ } & U_{2}^{2}=dU_{2}\\
&  & \\
U_{1}U_{2}U_{1}=U_{1} & \text{ \ \ and \ \ } & U_{2}U_{1}U_{2}=U_{2}%
\end{array}
\right.
\]
Moreover, the reader can verify the following proposition.

\bigskip

\begin{proposition}
The elements
\[
1,U_{1},U_{2},U_{1}U_{2},U_{2}U_{1}%
\]
form a basis of $TL_{3}\left(  d\right)  $ as a module over the ring
$\mathbb{Z}\left[  A,A^{-1}\right]  $. \ In other words, every element
$\omega$ of $TL_{3}\left(  d\right)  $ can be written as a linear combination
of the form
\[
\omega=\omega_{0}1+\omega_{1}U_{1}+\omega_{2}U_{2}+\omega_{12}U_{1}%
U_{2}+\omega_{21}U_{2}U_{1}=\omega_{0}1+\omega_{+}\text{ ,}%
\]
where
\[
\omega_{0},\omega_{1},\omega_{2},\omega_{12},\omega_{21}%
\]
are uniquely determined elements of the ring $\mathbb{Z}\left[  A,A^{-1}%
\right]  $.
\end{proposition}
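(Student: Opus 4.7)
The plan is to establish the two usual pieces separately: spanning by a direct reduction argument using the four defining relations, and linear independence by invoking the diagrammatic (planar tangle) description of $TL_3(d)$ introduced earlier in the paper.

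For spanning, I would argue by induction on the length $\ell$ of an arbitrary word $w$ in the generators $U_1,U_2$. Any word of length $\ell \le 2$ is already (up to a scalar from $\mathbb{Z}[A,A^{-1}]$) one of the five listed elements. For $\ell \ge 3$, the word must contain either a repeated adjacent letter $U_i U_i$, which the relation $U_i^2 = dU_i$ replaces by $dU_i$ and shortens the word by one, or an alternating triple $U_i U_{i\pm1} U_i$, which by $U_iU_{i\pm1}U_i = U_i$ shortens the word by two. Either operation strictly decreases $\ell$, so the inductive hypothesis applies. This shows every element of $TL_3(d)$ is a $\mathbb{Z}[A,A^{-1}]$-linear combination of $1,\,U_1,\,U_2,\,U_1U_2,\,U_2U_1$.

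For linear independence, I would use the diagrammatic model of $TL_n(d)$ described in Section 4. Under the rectangle picture, the five listed elements correspond to the five distinct planar (non-crossing) matchings of three top points with three bottom points, i.e., the $C_3 = 5$ Catalan diagrams on six boundary points. In this model, distinct planar diagrams are by construction free generators of the underlying $\mathbb{Z}[A,A^{-1}]$-module; the relations $U_i^2 = dU_i$ and $U_i U_{i\pm1} U_i = U_i$ are consequences of the circle-removal rule built into the diagrammatic product, not additional identifications among distinct diagrams. Hence any non-trivial linear combination of the five elements yields a non-trivial linear combination of five distinct diagrams and is non-zero.

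The main obstacle is the independence step, since nothing in the purely algebraic presentation rules out extra hidden relations. Proving independence directly from the defining relations would require constructing either a faithful representation or an explicit valuation separating the five diagrams; invoking the diagrammatic interpretation from \cite{Kauffman1,Kauffman2} turns the argument into a clean counting statement and supplies the required basis property automatically.
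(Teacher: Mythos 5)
Your argument is correct and complete; note that the paper itself offers no proof of this proposition (it is dismissed with ``the reader can verify''), so there is no in-text argument to compare against. Your spanning step is the standard rewriting argument and works exactly as you say: with only two generators the commutation relation is vacuous, any word of length $\ge 3$ in $U_1,U_2$ either contains a square $U_iU_i$ or is alternating and hence contains $U_iU_{i\pm1}U_i$, and either relation strictly shortens the word, so induction terminates at the five listed monomials (the length-two squares collapsing to $dU_i$). Your independence step is also the right one, and you correctly identify the genuine subtlety that spanning alone cannot settle: the presentation could a priori collapse further. The one point worth making explicit is that your appeal to the diagram model tacitly uses a well-defined algebra homomorphism from the abstractly presented $TL_3(d)$ onto the free $\mathbb{Z}[A,A^{-1}]$-module on the five Catalan diagrams; this requires checking that the five diagrams, under the stacking-and-circle-removal product, actually satisfy $U_i^2=dU_i$ and $U_iU_{i\pm1}U_i=U_i$ (a two-line diagrammatic verification), after which a vanishing linear combination of the five monomials pushes forward to a vanishing combination of five distinct free generators, forcing all coefficients to be zero. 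Since the paper introduces the diagrammatic description only ``informally'' and never establishes its equivalence with the presentation, your proof supplies precisely the content the paper omits, and the diagrammatic route is the cleanest way to do so; the only alternative would be to exhibit a faithful representation by hand, which is considerably more work.
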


\bigskip

\section{A degree 2 representation of the Temperley-Lieb algebra
$TL_{3}\left(  d\right)  $}

\bigskip

We construct a degree 2 representation of the Temperley-Lieb algebra
$TL_{3}\left(  d\right)  $ as follows:

\bigskip

Let $\left\vert e_{1}\right\rangle $ and $\left\vert e_{2}\right\rangle $ be
non-orthogonal unit length vectors from a two dimensional Hilbert space
$\mathcal{H}$. \ From Schwartz's inequality, we immediately know that%
\[
0<\left\vert \left\langle e_{1}|e_{2}\right\rangle \right\vert \leq1
\]

\bigskip

Let $\delta=\pm\left\vert \left\langle e_{1}|e_{2}\right\rangle \right\vert
^{-1}$. \ It immediately follows that
\[
1\leq\left\vert \delta\right\vert <\infty
\]
Moreover, let $\alpha$ denote a complex number such that $\delta=-\alpha
^{2}-\alpha^{-2}$.

\bigskip

We temporarily digress to state a technical lemma that will be needed later in
this paper. \ We leave the proof as an exercise for the reader.

\bigskip

\begin{lemma}
Let $\delta$ be a real number of magnitude $\left\vert \delta\right\vert
\geq1$, and let $\alpha$ be a complex number such that $\delta=-\alpha
^{2}-\alpha^{-2}$. \ Then each of the following is a necessary and sufficient
condition for $\alpha$ to lie on the unit circle:
\[%
\begin{tabular}
[c]{l}%
$\bullet$ \ $\delta$ is a real number such that $1\leq\left\vert
\delta\right\vert \leq2$.\\
$\bullet$ \ T\textit{here exist a} $\theta\in\lbrack0,2\pi]$ \textit{such
that} $\delta=-2\cos\left(  2\theta\right)  $.
\end{tabular}
\ \
\]
Thus,
\[
\left\{  \alpha\in\mathbb{C}:\exists\ \delta\text{ such that }1\leq\left\vert
\delta\right\vert \leq2\text{ and }\delta=-\alpha^{2}-\alpha^{-2}\right\}
\]
is equal to the following set of points on the unit circle%
\[
\left\{  e^{i\theta}:\theta\in\left[  0,\frac{\pi}{6}\right]  \sqcup\left[
\frac{\pi}{3},\frac{2\pi}{3}\right]  \sqcup\left[  \frac{5\pi}{6},\frac{7\pi
}{6}\right]  \sqcup\left[  \frac{4\pi}{3},\frac{5\pi}{3}\right]  \sqcup\left[
\frac{11\pi}{6},2\pi\right]  \right\}
\]
Also, as $\delta$ ranges over all values such that $1\leq\left\vert
\delta\right\vert \leq2$, $\alpha^{-4}$ ranges over two thirds of the unit
circle, i.e.,
\[
\left\{
\begin{array}
[c]{c}%
\\
\end{array}
\alpha^{-4}:\exists\ \delta\text{ such that }1\leq\left\vert \delta\right\vert
\leq2\text{ and }\delta=-\alpha^{2}-\alpha^{-2}%
\begin{array}
[c]{c}%
\\
\end{array}
\right\}  =\left\{  \ e^{i\varphi}:\left\vert \varphi\right\vert \leq
\frac{2\pi}{2}\ \right\}
\]

\end{lemma}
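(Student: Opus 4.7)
The natural approach is to parametrize $\alpha = r e^{i\theta}$ with $r > 0$, compute $\delta = -\alpha^{2} - \alpha^{-2}$ in rectangular form, and read off when the constraints can hold. A direct expansion gives
\[
\delta = -(r^{2} + r^{-2})\cos(2\theta) - i\,(r^{2} - r^{-2})\sin(2\theta).
\]
Since $\delta$ is required to be real, either $r = 1$ (so $\alpha$ lies on the unit circle) or $\sin(2\theta) = 0$. In the second case, $\cos(2\theta) = \pm 1$, and the AM--GM inequality $r^{2} + r^{-2} \geq 2$ (with equality iff $r = 1$) forces $|\delta| \geq 2$, with equality again only on the unit circle. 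The first claim of the lemma then follows: combining this dichotomy with the hypothesis $|\delta| \geq 1$ shows that $\alpha$ lies on the unit circle precisely when $1 \leq |\delta| \leq 2$, in which case $\delta = -2\cos(2\theta)$ gives the second characterization.

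For the explicit identification of arcs, I would set $\alpha = e^{i\theta}$ and translate the inequality $1 \leq |{-2\cos(2\theta)}| \leq 2$ into $|\cos(2\theta)| \geq \tfrac{1}{2}$. Solving this standard trigonometric inequality for $2\theta \in [0, 4\pi]$ and then dividing by two produces exactly the five arcs
\[
\bigl[0, \tfrac{\pi}{6}\bigr] \sqcup \bigl[\tfrac{\pi}{3}, \tfrac{2\pi}{3}\bigr] \sqcup \bigl[\tfrac{5\pi}{6}, \tfrac{7\pi}{6}\bigr] \sqcup \bigl[\tfrac{4\pi}{3}, \tfrac{5\pi}{3}\bigr] \sqcup \bigl[\tfrac{11\pi}{6}, 2\pi\bigr]
\]
listed in the statement. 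This step is essentially bookkeeping once the characterization $\delta = -2\cos(2\theta)$ is in hand.

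For the final claim about $\alpha^{-4}$, I would compute $\alpha^{-4} = e^{-4i\theta}$ on each of the five arcs separately. The map $\theta \mapsto -4\theta \pmod{2\pi}$ has total domain length $\tfrac{4\pi}{3}$, but after reducing mod $2\pi$ the images overlap. A short case-by-case reduction shows that each arc (except the two half-arcs at the endpoints, which together contribute $[-\tfrac{2\pi}{3}, 0] \cup [0, \tfrac{2\pi}{3}]$) already maps onto the full arc $\varphi \in [-\tfrac{2\pi}{3}, \tfrac{2\pi}{3}]$, giving the stated two-thirds of the unit circle.

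The only subtlety I foresee is in the very first step: one must be careful to argue that the boundary case $|\delta| = 2$ indeed forces $r = 1$, rather than allowing a second branch of real solutions with $r \neq 1$. This is handled cleanly by the equality case of AM--GM, but it is the one place where a slip could leave an extraneous non-unit-modulus solution in the list. After that, the remaining work is mechanical trigonometry and arithmetic modulo $2\pi$.
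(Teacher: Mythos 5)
The paper itself gives no proof of this lemma --- it is explicitly ``left as an exercise for the reader'' --- so there is no in-paper argument to compare against; your proposal supplies a complete and correct proof. The key computation $\delta = -(r^{2}+r^{-2})\cos(2\theta) - i(r^{2}-r^{-2})\sin(2\theta)$ for $\alpha = re^{i\theta}$, the dichotomy $r=1$ versus $\sin(2\theta)=0$, and the AM--GM handling of the boundary case $|\delta|=2$ are exactly what is needed to establish the equivalence; the reduction of the arcs to $|\cos(2\theta)|\geq \tfrac{1}{2}$ and the mod-$2\pi$ bookkeeping for $\varphi = -4\theta$ both check out (each of the three interior arcs covers all of $[-\tfrac{2\pi}{3},\tfrac{2\pi}{3}]$, and the two boundary half-arcs contribute $[-\tfrac{2\pi}{3},0]$ and $[0,\tfrac{2\pi}{3}]$ respectively). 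One remark: the final display in the lemma as printed reads $\left\vert\varphi\right\vert\leq\frac{2\pi}{2}$, which is a typo for $\frac{2\pi}{3}$ (consistent with the phrase ``two thirds of the unit circle'' and with your computation), so your proof establishes the intended statement rather than the literal one.
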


\bigskip%

\begin{center}
\includegraphics[
height=2.2131in,
width=3.7766in
]%
{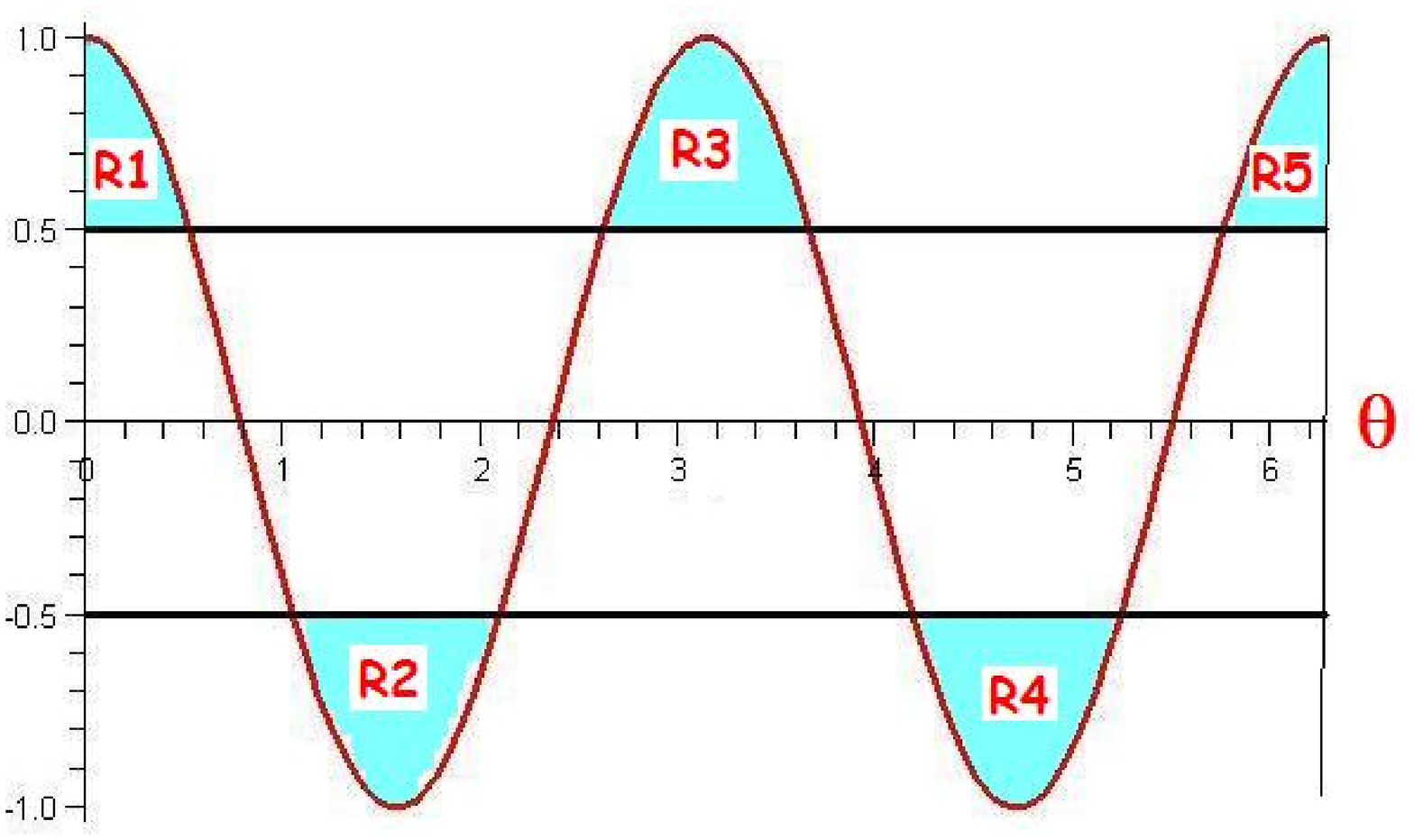}%
\\
Figure 8. \ A plot of $\cos\left(  2\theta\right)  $ for $0\leq\theta\leq2\pi
$.
\end{center}

\bigskip

We continue with the construction of our representation by using the unit
length vectors $\left\vert e_{1}\right\rangle $ and $\left\vert e_{2}%
\right\rangle $ to create projection operators
\[
E_{1}=\left\vert e_{1}\right\rangle \left\langle e_{1}\right\vert \text{
\ \ and \ \ }E_{2}=\left\vert e_{2}\right\rangle \left\langle e_{2}%
\right\vert
\]
These linear operators $E_{1}$ and $E_{2}$ are elements of the endomorphism
ring $End\left(  \mathcal{H}\right)  \cong Mat\left(  2,2;\mathbb{C}\right)  $
of the Hilbert space $\mathcal{H}$. \ Since they are projection operators,
they are Hermitian. \ By construction, they are of unit trace, i.e.,
\[
tr\left(  E_{1}\right)  =1=tr\left(  E_{2}\right)
\]
where $tr$ denotes the standard trace on $End\left(  \mathcal{H}\right)  \cong
Mat\left(  2,2;\mathbb{C}\right)  $. \ The reader can also readily verify
that
\[
tr(E_{1}E_{2})=\delta^{-2}=tr\left(  E_{2}E_{1}\right)
\]
and that $E_{1}$ and $E_{2}$ satisfy the relations

\[
\left\{
\begin{array}
[c]{lcl}%
E_{1}^{2}=E_{1} & \text{ \ \ and \ \ } & E_{2}^{2}=E_{2}\\
&  & \\
E_{1}E_{2}E_{1}=\delta^{-2}E_{1} & \text{ \ \ and \ \ } & E_{2}E_{1}%
E_{2}=\delta^{-2}E_{2}%
\end{array}
\right.
\]

\bigskip

It now follows that

\bigskip

\begin{theorem}
Let $\delta=\pm\left\vert \left\langle e_{1}|e_{2}\right\rangle \right\vert
^{-1}$ (hence, $\left\vert \delta\right\vert \geq1$), and let $\alpha$ be a
complex number such that $\delta=-\alpha^{2}-\alpha^{-2}$. \ Then the map%
\[%
\begin{array}
[c]{ccl}%
\Phi_{\alpha}:TL_{3}\left(  d\right)  & \longrightarrow & End\left(
\mathcal{H}\right)  \cong Mat\left(  2,2;\mathbb{C}\right) \\
U_{j} & \longmapsto & \quad\delta E_{j}\\
d & \longmapsto & \quad\ \delta\\
A & \longmapsto & \quad\ \alpha
\end{array}
\]
is a well defined degree 2 representation of the Temperley-Lieb algebra
$TL_{3}\left(  d\right)  $. \ Moreover, we have%
\[%
\begin{array}
[c]{c}%
tr\left(  \Phi_{\alpha}\left(  U_{1}\right)  \right)  =\delta=tr\left(
\Phi_{\alpha}\left(  U_{2}\right)  \right) \\
\text{and}\\
tr\left(  \Phi_{\alpha}\left(  U_{1}U_{2}\right)  \right)  =1=tr\left(
\Phi_{\alpha}\left(  U_{2}U_{1}\right)  \right)
\end{array}
\]

\end{theorem}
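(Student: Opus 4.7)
The plan is to verify that $\Phi_\alpha$ respects every defining relation of $TL_3(d)$ and then read off the trace values from the already-established identities for $E_1$ and $E_2$. Since $TL_3(d)$ is presented as an algebra on generators $1,U_1,U_2$ with explicit relations, and the scalar $d$ is constrained by $d=-A^2-A^{-2}$, I need only check three things: (i) the image of $d$ matches $-\alpha^2-\alpha^{-2}$, (ii) $\Phi_\alpha(U_i)^2=\Phi_\alpha(dU_i)$, and (iii) $\Phi_\alpha(U_iU_{i\pm1}U_i)=\Phi_\alpha(U_i)$ for $i=1,2$.

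First I would dispatch (i): by the hypothesis $\delta=-\alpha^2-\alpha^{-2}$, so the assignments $d\mapsto\delta$ and $A\mapsto\alpha$ are mutually consistent with the constraint $d=-A^2-A^{-2}$. For (ii), using that $E_i$ is a projection,
\[
\Phi_\alpha(U_i)^2=(\delta E_i)^2=\delta^2 E_i^2=\delta^2 E_i=\delta\cdot\delta E_i=\Phi_\alpha(d)\Phi_\alpha(U_i).
\]
For (iii), applying the identity $E_iE_jE_i=\delta^{-2}E_i$ (for $\{i,j\}=\{1,2\}$) established just above the theorem,
\[
\Phi_\alpha(U_iU_jU_i)=\delta^{3}E_iE_jE_i=\delta^{3}\cdot\delta^{-2}E_i=\delta E_i=\Phi_\alpha(U_i).
\]
Thus every defining relation is honored and $\Phi_\alpha$ extends to a well-defined algebra homomorphism; since it lands in $\operatorname{End}(\mathcal{H})\cong \operatorname{Mat}(2,2;\mathbb{C})$, it is a degree $2$ representation.

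For the trace formulas, I would simply invoke linearity of $tr$ and the prior computations $tr(E_i)=1$ and $tr(E_1E_2)=tr(E_2E_1)=\delta^{-2}$:
\[
tr\bigl(\Phi_\alpha(U_i)\bigr)=\delta\,tr(E_i)=\delta,\qquad tr\bigl(\Phi_\alpha(U_1U_2)\bigr)=\delta^2\,tr(E_1E_2)=\delta^2\cdot\delta^{-2}=1,
\]
with the same computation yielding $tr(\Phi_\alpha(U_2U_1))=1$.

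The main ``obstacle'' here is really bookkeeping rather than mathematical: one must be careful to check the relation $U_iU_{i\pm 1}U_i=U_i$ for both orderings (since $TL_3(d)$ contains both $U_1U_2U_1=U_1$ and $U_2U_1U_2=U_2$), and to confirm that the hypothesis $\delta=\pm|\langle e_1|e_2\rangle|^{-1}$ forces $|\delta|\geq 1$ so that an $\alpha$ satisfying $\delta=-\alpha^2-\alpha^{-2}$ exists. The existence of such an $\alpha$ is a straightforward quadratic-in-$\alpha^2$ argument, and is in any case guaranteed by Lemma stated earlier in the excerpt. No further verification is required, since the relations listed exhaust the defining relations of $TL_3(d)$.
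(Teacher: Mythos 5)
Your proof is correct and follows exactly the route the paper intends: the paper offers no explicit proof (it simply says ``It now follows that'' after recording $E_i^2=E_i$, $E_iE_jE_i=\delta^{-2}E_i$, $tr(E_i)=1$, and $tr(E_1E_2)=\delta^{-2}$), and your verification of the defining relations of $TL_3(d)$ together with the trace computations is precisely the implicit argument, carried out correctly. No gaps.
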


\bigskip

\begin{proposition}
Let $\delta=\pm\left\vert \left\langle e_{1}|e_{2}\right\rangle \right\vert
^{-1}$ (hence, $\left\vert \delta\right\vert \geq1$), and let $\alpha$ be a
complex number such that $\delta=-\alpha^{2}-\alpha^{-2}$. \ Moreover, let
$\operatorname{eval}_{\alpha}:$ $\mathbb{Z}\left[  A,A^{-1}\right]
\longrightarrow\mathbb{C}$ be the map defined by $A\longmapsto\alpha$. \ Then
the diagram
\[%
\begin{array}
[c]{ccc}%
TL_{3}\left(  d\right)  & \overset{\Phi_{a}}{\longrightarrow} & Mat\left(
2,2;\mathbb{C}\right) \\
Tr_{M}\downarrow &  & \downarrow tr\\
\mathbb{Z}\left[  A,A^{-1}\right]  & \overset{\operatorname{eval}_{\alpha}%
}{\longrightarrow} & \mathbb{C}%
\end{array}
\]
is \textbf{almost commutative} in the sense that, for each element $\omega\in
TL_{3}\left(  d\right)  $,%
\[
\operatorname{eval}_{\alpha}\circ Tr_{M}\left(  \omega\right)  =tr\circ
\Phi_{\alpha}\left(  \omega\right)  +\left(  \delta-2\right)  \omega
_{0}\text{, }%
\]
where $\omega_{0}$ denotes the coefficient of the generator $1$ in $\omega$.
\end{proposition}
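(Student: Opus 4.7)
The strategy is to exploit linearity and reduce to a finite check on the basis $\{1, U_1, U_2, U_1U_2, U_2U_1\}$ of $TL_3(d)$ supplied by the preceding proposition. Both sides of the claimed identity are $\mathbb{Z}[A,A^{-1}]$-linear functions of $\omega$: the maps $Tr_M$ and $\Phi_\alpha$ are linear by definition, $\operatorname{eval}_\alpha$ and $tr$ are linear, and the coefficient-extraction $\omega \mapsto \omega_0$ is linear because the five generators form a basis. Thus the identity needs only to be verified on each basis element, and then assembled by linearity.

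For the four non-identity basis elements one has $\omega_0 = 0$, so the correction term drops out and the claim collapses to the strict equality $\operatorname{eval}_\alpha\circ Tr_M(\omega) = tr\circ\Phi_\alpha(\omega)$. The values of $tr\circ\Phi_\alpha$ are already recorded in the preceding theorem: $\delta$ on $U_1$ and $U_2$, and $1$ on $U_1U_2$ and $U_2U_1$. For the Markov trace I would draw each Temperley--Lieb diagram and count disjoint loops in its closure: closing $U_i$ yields two loops, so $Tr_M(U_i) = d^{2-1} = d$, which evaluates to $\delta$; closing $U_1U_2$ or $U_2U_1$ yields a single loop, so $Tr_M = d^{0} = 1$. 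The four diagram checks therefore agree with the representation computations without any correction.

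The substantive step is $\omega = 1$, where $\omega_0 = 1$ and the correction term actually contributes. On the matrix side, $\Phi_\alpha(1)$ is the $2 \times 2$ identity on $\mathcal{H}$, so $tr\circ\Phi_\alpha(1) = 2$. On the Temperley--Lieb side, the Markov closure of the three-strand identity is a three-component unlink, so $Tr_M(1) = d^{3-1} = d^{2}$, which evaluates to $\delta^2$. The whole purpose of the correction is to absorb the dimension mismatch between the matrix trace (which sees $\dim \mathcal{H} = 2$) and the Markov trace (which sees a power of $d$); the main obstacle in the proof is to check that the specific offset prescribed in the statement cancels precisely this gap on the identity element. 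Once that single comparison is carried out, linearity together with the four routine diagram checks above completes the argument.
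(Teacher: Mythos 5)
Your overall strategy---reduce by $\mathbb{Z}[A,A^{-1}]$-linearity to the basis $1,U_{1},U_{2},U_{1}U_{2},U_{2}U_{1}$ and verify the identity on each basis element---is the natural one, and your four checks on the non-identity elements are correct: closing $U_{i}$ gives two loops so $Tr_{M}(U_{i})=d\mapsto\delta=tr(\delta E_{i})$, and closing $U_{1}U_{2}$ or $U_{2}U_{1}$ gives one loop so $Tr_{M}=1=\delta^{2}\,tr(E_{1}E_{2})$, using $tr(E_{i}E_{j})=\delta^{-2}$. (For what it is worth, the paper states this proposition without any proof, so there is no argument of the authors' to compare yours against.)

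The gap is that you single out $\omega=1$ as ``the substantive step'' and then never actually perform it; you only assert that the offset ``absorbs the dimension mismatch.'' If you carry the step out with the very numbers you computed, it fails: the left side is $\operatorname{eval}_{\alpha}(Tr_{M}(1))=\operatorname{eval}_{\alpha}(d^{3-1})=\delta^{2}$, while the right side is $tr(I)+(\delta-2)\cdot1=2+\delta-2=\delta$, and $\delta^{2}=\delta$ is impossible for real $\delta$ with $\left\vert \delta\right\vert \geq1$ except $\delta=1$. Indeed, writing out a general $\omega$ in the basis shows the discrepancy between $\operatorname{eval}_{\alpha}\circ Tr_{M}$ and $tr\circ\Phi_{\alpha}$ is exactly $\omega_{0}(\delta^{2}-2)$, not $\omega_{0}(\delta-2)$; so either the correction term in the statement should read $(\delta^{2}-2)\omega_{0}$, or the Markov trace must be renormalized so that $Tr_{M}(1)=d$ rather than $d^{2}$. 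A proof cannot defer precisely the computation on which the claimed constant hinges---that is the one place where the argument is not routine, and here it is the place where the identity as written breaks down (and the same discrepancy propagates into the later formula $V(e^{i\varphi})=tr(U(b))+(\delta-2)e^{i\theta\,Writhe(b)}$). To repair your write-up you must either exhibit the check on $\omega=1$ explicitly with a corrected constant, or explain what nonstandard normalization of $Tr_{M}$ makes $(\delta-2)$ the right offset.
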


\bigskip

\section{A degree 2 unitary representation of the three stranded braid group
$B_{3}$}

\bigskip

In this section, we compose the above constructed representation $\Phi
_{\alpha}$ with the Jones representation $J$ to create a representation of the
three stranded braid group $B_{3}$. \ We then determine when this
representation $\Phi_{\alpha}$ is unitary.

\bigskip

We begin by quickly recalling that the 3-stranded braid group $B_{3}$ is
generated by the standard braid generators
\[
b_{1},b_{2}%
\]
with, in this case, the single defining relation%
\[
b_{1}b_{2}b_{1}=b_{2}b_{1}b_{2}%
\]

\bigskip

We also recall that the Jones representation
\[
B_{3}\overset{J}{\longrightarrow}TL_{3}\left(  d\right)
\]
is defined by
\[
\left\{
\begin{array}
[c]{ccc}%
b_{j} & \longmapsto & A1+A^{-1}U_{j}\\
&  & \\
b_{j}^{-1} & \longmapsto & A^{-1}1+AU_{j}%
\end{array}
\right.
\]
where $A$ is a indeterminate satisfying $d=-A^{2}-A^{-2}$. \ Thus, if we let
$\delta$ and $\alpha$ be as defined in the previous section, we have that
\[%
\begin{array}
[c]{ccl}%
B_{3} & \overset{\Phi_{\alpha}\circ J}{\longrightarrow} & End\left(
\mathcal{H}\right)  \cong Mat\left(  2,2;\mathbb{C}\right)  \\
b_{j} & \longmapsto & \alpha I+\alpha^{-1}dE_{j}\\
b_{j} & \longmapsto & \alpha^{-1}I+\alpha dE_{j}%
\end{array}
\]
is a degree 2 representation of the braid group $B_{3}$. \ Moreover, we have

\bigskip

\begin{proposition}
Let $\delta=\pm\left\vert \left\langle e_{1}|e_{2}\right\rangle \right\vert
^{-1}$ (hence, $\left\vert \delta\right\vert \geq1$), and let $\alpha$ be a
complex number such that $\delta=-\alpha^{2}-\alpha^{-2}$. \ Then the degree 2
representation
\[%
\begin{array}
[c]{ccl}%
B_{3} & \overset{\Phi_{\alpha}\circ J}{\longrightarrow} & End\left(
\mathcal{H}\right)  \cong Mat\left(  2,2;\mathbb{C}\right)
\end{array}
\]
is a unitary representation of the braid group $B_{3}$ if and only if $\alpha$
lies on the unit circle in the complex plane.
\end{proposition}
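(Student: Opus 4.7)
The plan is to reduce unitarity of the entire representation to unitarity of the two generator images $M_j := (\Phi_\alpha \circ J)(b_j) = \alpha I + \alpha^{-1}\delta E_j$ for $j=1,2$. Since unitary operators are closed under products and inverses, the representation is unitary on all of $B_3$ if and only if $M_1$ and $M_2$ are unitary, so I would work exclusively with the generators.

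Next I would compute $M_j M_j^*$ directly. Because $\delta$ is real (it equals $\pm|\langle e_1|e_2\rangle|^{-1}$) and $E_j=|e_j\rangle\langle e_j|$ is Hermitian, we have $M_j^* = \bar\alpha I + \bar\alpha^{-1}\delta E_j$. Using $E_j^2 = E_j$, expanding gives
$$M_j M_j^* = |\alpha|^2\, I + \Bigl[\delta\bigl(\alpha\bar\alpha^{-1}+\bar\alpha\alpha^{-1}\bigr) + |\alpha|^{-2}\delta^2\Bigr] E_j.$$
Since $E_j$ is a rank-one projection on a 2-dimensional space, it is not a scalar multiple of $I$, so $\{I, E_j\}$ is linearly independent in $\mathrm{End}(\mathcal{H})$. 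Hence $M_j M_j^* = I$ if and only if the coefficient of $I$ equals $1$ and the coefficient of $E_j$ vanishes.

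The coefficient-of-$I$ condition reads $|\alpha|^2 = 1$, which already gives the converse direction: if $M_j$ is unitary, then $\alpha$ must lie on the unit circle. For the forward direction, assume $|\alpha|=1$, so $\bar\alpha = \alpha^{-1}$. Then $\alpha\bar\alpha^{-1}+\bar\alpha\alpha^{-1} = \alpha^2+\alpha^{-2} = -\delta$ by the defining relation, and the bracketed coefficient becomes $\delta(-\delta)+\delta^2 = 0$. Thus $M_j M_j^* = I$ for each $j$, completing the proof.

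I do not expect a serious obstacle here; the calculation is essentially a single line once one observes that $\{I, E_j\}$ is linearly independent. The only place to be mildly careful is in invoking linear independence to split the matrix equation into two scalar equations, and in noting that $\delta$ is real so that conjugation leaves $\delta$ (and hence the coefficient $\delta E_j$ of $M_j$) fixed; without this, the computation of $M_j^*$ would pick up an extra $\bar\delta/\delta$ factor and the algebraic identity $\alpha^2+\alpha^{-2}=-\delta$ would not collapse the $E_j$-coefficient so cleanly.
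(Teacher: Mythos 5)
Your proof is correct and follows essentially the same route as the paper's: reduce unitarity to the generator images, compute the adjoint using the reality of $\delta$ and the Hermiticity of $E_j$, and invoke the linear independence of $I$ and $E_j$ to extract the scalar condition $\overline{\alpha}=\alpha^{-1}$. The only cosmetic difference is that the paper checks $M_j^{\dag}=M_j^{-1}$ by comparing the adjoint with the image of $b_j^{-1}$ (namely $\alpha^{-1}I+\alpha\delta E_j$), whereas you verify $M_jM_j^{\dag}=I$ directly using $E_j^{2}=E_j$ and $\delta=-\alpha^{2}-\alpha^{-2}$.
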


\begin{proof}
Since $d$ is real and $E_{j}$ is Hermitian, we have
\[
\left(  \alpha I+\alpha^{-1}\delta E_{j}\right)  ^{\dag}=\overline{\alpha
}I+\overline{\alpha}^{-1}\delta E_{j}%
\]
So for unitarity, we must have
\[
\overline{\alpha}I+\overline{\alpha}^{-1}\delta E_{j}=\alpha^{-1}%
I+\alpha\delta E_{j}%
\]

It now follows from the linear independence of $I$, $E_{1}$, $E_{2}$ that
$\Phi\circ J$ is unitary if and only if
\[
\overline{\alpha}=\alpha^{-1}%
\]

\end{proof}

\bigskip

From lemma 7.1, we have the following

\bigskip

\begin{corollary}
Let $\delta=\pm\left\vert \left\langle e_{1}|e_{2}\right\rangle \right\vert
^{-1}$ (hence, $\left\vert \delta\right\vert \geq1$), and let $\alpha$ be a
complex number such that $\delta=-\alpha^{2}-\alpha^{-2}$. \ Then the
representation $\Phi_{\alpha}\circ J$ is unitary if and only if $\alpha
=e^{i\theta}$, where $\theta$ lies in the set%
\[
\left\{  \theta\in\left[  0,2\pi\right]  :\left\vert \cos\left(
2\theta\right)  \right\vert \geq\frac{1}{2}\right\}  =\left[  0,\frac{\pi}%
{6}\right]  \sqcup\left[  \frac{\pi}{3},\frac{2\pi}{3}\right]  \sqcup\left[
\frac{5\pi}{6},\frac{7\pi}{6}\right]  \sqcup\left[  \frac{4\pi}{3},\frac{5\pi
}{3}\right]  \sqcup\left[  \frac{11\pi}{6},2\pi\right]
\]

\end{corollary}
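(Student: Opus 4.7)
The plan is to combine Proposition 8.1 with the explicit unit-circle description packaged in Lemma 7.1. First, by Proposition 8.1, the representation $\Phi_{\alpha} \circ J$ is unitary if and only if $|\alpha| = 1$, so I would immediately parameterize $\alpha = e^{i\theta}$ with $\theta \in [0, 2\pi]$. Substituting into the defining relation $\delta = -\alpha^{2} - \alpha^{-2}$ gives $\delta = -e^{2i\theta} - e^{-2i\theta} = -2\cos(2\theta)$, so $\delta$ is automatically real with $|\delta| \leq 2$. The standing hypothesis $|\delta| = |\langle e_{1}|e_{2}\rangle|^{-1} \geq 1$ therefore collapses to the single scalar inequality $|\cos(2\theta)| \geq \tfrac{1}{2}$, which is exactly the characterization appearing in the corollary.

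The remaining step is the explicit solution of $|\cos(2\theta)| \geq \tfrac{1}{2}$ for $\theta \in [0, 2\pi]$. As $2\theta$ traverses $[0, 4\pi]$, the inequality $\cos(2\theta) \geq \tfrac{1}{2}$ holds on $2\theta \in [0, \pi/3] \cup [5\pi/3, 7\pi/3] \cup [11\pi/3, 4\pi]$, while $\cos(2\theta) \leq -\tfrac{1}{2}$ holds on $2\theta \in [2\pi/3, 4\pi/3] \cup [8\pi/3, 10\pi/3]$. Dividing each endpoint by $2$ and taking the union yields precisely the five arcs $[0, \pi/6] \sqcup [\pi/3, 2\pi/3] \sqcup [5\pi/6, 7\pi/6] \sqcup [4\pi/3, 5\pi/3] \sqcup [11\pi/6, 2\pi]$ displayed in the statement.

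There is no real obstacle here, since the corollary is essentially a direct composition of two already-established facts: Proposition 8.1 reduces unitarity to $\alpha$ being on the unit circle, and Lemma 7.1 (which has already performed this trigonometric bookkeeping in its characterization of the admissible $\alpha$) supplies the explicit arc decomposition. The only mild care required is matching the endpoints $\pi/6$, $\pi/3$, $2\pi/3$, $5\pi/6$, etc. against the preimages under $\theta \mapsto \cos(2\theta)$ of $\pm 1/2$, which is routine.
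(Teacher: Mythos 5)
Your proposal is correct and follows the same route as the paper, which simply combines the unitarity criterion $\overline{\alpha}=\alpha^{-1}$ from the preceding proposition with the arc decomposition already recorded in Lemma 7.1; your explicit solution of $\left\vert \cos\left(2\theta\right)\right\vert \geq \frac{1}{2}$ merely unpacks the bookkeeping that the paper delegates to that lemma. The endpoint computations check out, so nothing further is needed.
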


\bigskip

\section{Computing the Jones polynomial}

\bigskip

Recall that the Jones polynomial $V\left(  t\right)  $ of a knot (or link) $K$
given by the closure $\overline{b}$ of a braid word $b$ is defined as
\[
V\left(  t\right)  =\left(  -A^{3}\right)  ^{Writhe(b)}Tr_{M}\left(  J\left(
b\right)  \right)  \text{ ,}%
\]
where $t=A^{-4}$. \ 

\bigskip

Thus, from Proposition 7.3, we know that the value of the Jones polynomial at
a point $t=e^{i\varphi}$ on the unit circle is given by%
\[
V\left(  e^{i\varphi}\right)  =\left(  -e^{3i\theta}\right)  ^{Writhe(b)}%
\operatorname{eval}_{e^{i\theta}}\circ Tr_{M}\circ J\left(  b\right)  =\left(
-e^{3i\theta}\right)  ^{Writhe(b)}\left(  tr\circ\Phi_{e^{i\theta}}\circ
J\right)  \left(  b\right)  +\left(  \delta-2\right)  \left(  -e^{4i\theta
}\right)  ^{Writhe(b)}\text{ \ ,}%
\]
where $e^{i\theta}$ is a point on the unit circle such that $e^{i\varphi
}=\left(  e^{i\theta}\right)  ^{-4}=e^{-4i\theta}$. \ From lemma 1, we know
that $\Phi_{e^{i\theta}}$ is only defined when $\left\vert \cos\left(
2\theta\right)  \right\vert \geq\frac{1}{2}$. \ \ Moreover, since
$\varphi=-4\theta\ \operatorname{mod}\ 2\pi$, it also follows from lemma 1
that $\Phi_{e^{i\theta}}$ is only defined when $\left\vert \varphi\right\vert
\leq\frac{2\pi}{3}$.

\bigskip

\begin{theorem}
Let $\varphi$ be a real number such that $\left\vert \varphi\right\vert
\leq\frac{2\pi}{3}$,\ and let $\theta$ be a real number such that
$\varphi=-4\theta\ \operatorname{mod}\ 2\pi$. \ Let $K$ be a knot (or link)
given by the closure $\overline{b}$ of a 3-stranded braid $b\in B_{3}$. \ Then
the value of the Jones polynomial $V\left(  t\right)  $ for the knot (or link)
$K$ at $t=e^{i\varphi}$ is given by%
\[
V\left(  e^{i\varphi}\right)  =tr\left(  U(b)\right)  +\left(  \delta
-2\right)  e^{i\theta Writhe(b)}\text{ ,}%
\]
where $U$ is the unitary transformation
\[
U=U(b)=\left(  \Phi_{e^{i\theta}}\circ J\right)  \left(  b\right)  \text{ \ .}%
\]

\end{theorem}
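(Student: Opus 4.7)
The plan is to start from the definition $V(t) = (-A^3)^{Writhe(b)} Tr_M(J(b))$, specialize to $A = \alpha = e^{i\theta}$ (so that $t = A^{-4} = e^{i\varphi}$ precisely when $\varphi = -4\theta \bmod 2\pi$), and then use the ``almost commutative'' diagram of Proposition 7.3 to push the Markov trace through the matrix representation $\Phi_{e^{i\theta}}$. Lemma 7.1 together with Corollary 8.2 guarantees that the admissible arc $|\varphi| \leq 2\pi/3$ corresponds to values of $\theta$ for which $\delta$ is real with $1 \leq |\delta| \leq 2$ and $\Phi_{e^{i\theta}}\circ J$ is genuinely unitary, so every object in sight is well defined.

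Applying Proposition 7.3 to $\omega = J(b) \in TL_3(d)$ gives
\[
\operatorname{eval}_{e^{i\theta}}\bigl(Tr_M(J(b))\bigr) \;=\; tr\bigl((\Phi_{e^{i\theta}}\circ J)(b)\bigr) \;+\; (\delta-2)\,\omega_0,
\]
where $\omega_0$ is the coefficient of the basis element $1$ in $J(b)$ expanded in the basis $\{1, U_1, U_2, U_1U_2, U_2U_1\}$ of Proposition 6.1. Multiplying through by the evaluated writhe factor $(-e^{3i\theta})^{Writhe(b)}$ and folding it into the matrix on the right (this is the normalization absorbed into the definition of $U(b)$) reduces the entire theorem to the single identification $\omega_0 = A^{Writhe(b)}$.

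The substantive step, and the main obstacle, is this computation of $\omega_0$. Writing $b = \prod_{i=1}^{L} b_{j(i)}^{\epsilon(i)}$, the Jones representation gives
\[
J(b) \;=\; \prod_{i=1}^{L} \Bigl( A^{\epsilon(i)}\,1 \;+\; A^{-\epsilon(i)}\, U_{j(i)} \Bigr).
\]
Expanding this product yields $2^L$ summands; exactly one of them, obtained by choosing the scalar term at every position, equals $\bigl(\prod_i A^{\epsilon(i)}\bigr) \cdot 1 = A^{Writhe(b)} \cdot 1$, while every other summand contains at least one factor $U_{j(i)}$. I would then argue that any nonempty monomial in $U_1, U_2$ lies in the two-sided submodule $I$ spanned by $\{U_1, U_2, U_1U_2, U_2U_1\}$, which by Proposition 6.1 is complementary to $\mathbb{Z}[A,A^{-1}]\cdot 1$ in $TL_3(d)$: the defining relations $U_i^2 = dU_i$ and $U_i U_{i\pm 1} U_i = U_i$ only ever rewrite $U$-monomials as $\mathbb{Z}[A,A^{-1}]$-linear combinations of other $U$-monomials, and never produce a pure scalar. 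Consequently all summands containing at least one $U$ contribute $0$ to $\omega_0$, and so $\omega_0 = A^{Writhe(b)}$.

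Substituting $A = e^{i\theta}$ yields $\omega_0 = e^{i\theta\, Writhe(b)}$. Folding the overall scalar $(-e^{3i\theta})^{Writhe(b)}$ into the unitary $U(b) = (\Phi_{e^{i\theta}}\circ J)(b)$ per the normalization convention of the theorem, and invoking Proposition 8.1 to confirm that this matrix is indeed unitary on the admissible arc, produces the desired identity
\[
V(e^{i\varphi}) \;=\; tr\bigl(U(b)\bigr) \;+\; (\delta-2)\,e^{i\theta\, Writhe(b)}.
\]
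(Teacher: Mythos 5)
Your proposal follows the same route as the paper: start from $V(t)=(-A^{3})^{Writhe(b)}Tr_{M}(J(b))$, evaluate at $A=e^{i\theta}$ with $t=A^{-4}=e^{i\varphi}$, and apply the almost-commutative diagram of Proposition 7.3. In fact you supply a step the paper only uses implicitly: the identification $\omega_{0}=A^{Writhe(b)}$. Your argument for it --- expand $J(b)=\prod_{i}\bigl(A^{\epsilon(i)}1+A^{-\epsilon(i)}U_{j(i)}\bigr)$ into $2^{L}$ summands, note that exactly one summand is scalar, and observe that the relations $U_{i}^{2}=dU_{i}$ and $U_{i}U_{i\pm1}U_{i}=U_{i}$ send nonempty $U$-monomials to scalar multiples of nonempty $U$-monomials, so the span of $U_{1},U_{2},U_{1}U_{2},U_{2}U_{1}$ is an ideal not containing $1$ --- is correct and is exactly the justification the paper omits.

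The step that does not hold up is the final ``folding.'' What your derivation actually yields (and what the paper's own displayed computation preceding the theorem, and its Corollary on the classical algorithm, both assert) is
\[
V\bigl(e^{i\varphi}\bigr)=\bigl(-e^{3i\theta}\bigr)^{W}\,tr\bigl((\Phi_{e^{i\theta}}\circ J)(b)\bigr)+(\delta-2)\bigl(-e^{4i\theta}\bigr)^{W},\qquad W=Writhe(b).
\]
Absorbing the scalar $\bigl(-e^{3i\theta}\bigr)^{W}$ into the matrix rescales the trace term but does nothing to the second term, which remains $(\delta-2)\bigl(-e^{4i\theta}\bigr)^{W}=(\delta-2)\bigl(-e^{3i\theta}\bigr)^{W}e^{i\theta W}$; this differs from the theorem's $(\delta-2)e^{i\theta W}$ by the factor $\bigl(-e^{3i\theta}\bigr)^{W}$, and the theorem moreover defines $U(b)=(\Phi_{e^{i\theta}}\circ J)(b)$ with no prefactor at all. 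So no choice of normalization convention converts the formula you proved into the literal statement. To be fair, the discrepancy originates in the paper: the theorem as printed is inconsistent with the display immediately above it. But your proof should flag that mismatch explicitly rather than bridge it with the phrase ``the normalization absorbed into the definition of $U(b)$,'' which as written is not a valid move.
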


\bigskip

Let us now assume that $\left\vert \varphi\right\vert \leq\frac{2\pi}{3}$ and
that $\varphi=-4\theta\ \operatorname{mod}\ 2\pi$. \ Hence, $U=U(b)$ is
unitary. \ Thus, if the knot (or link) $K$ is given by the closure
$\overline{b}$ of a braid $b$ defined by a word
\[
b=%
{\displaystyle\prod\limits_{k=1}^{L}}
b_{j\left(  k\right)  }^{\epsilon\left(  k\right)  }=b_{j(1)}^{\epsilon
(1)}b_{j(2)}^{\epsilon(2)}\cdots b_{j(L)}^{\epsilon(L)}\text{ \ ,}%
\]
where $b_{1},b_{2}$ are the generators of the braid group $B_{3}$, and where
$\epsilon\left(  k\right)  =\pm1$ for $k=1,2,\ldots,L$, then the unitary
transformation $U=U\left(  b\right)  $ can be rewritten as
\[
U=%
{\displaystyle\prod\limits_{k=1}^{L}}
\left(  U^{\left(  j(k)\right)  }\right)  ^{\epsilon\left(  k\right)  }\text{
,}%
\]
where $U^{(j)}$ denotes the unitary transformation (called an
\textbf{elementary gate}) given by%
\begin{align*}
U^{(j)}  &  =\left(  \Phi_{e^{i\theta}}\circ J\right)  \left(  b_{j}\right)
=\Phi_{e^{i\theta}}\left(  A1+A-1U_{j}\right)  =e^{i\theta}I-2e^{-i\theta}%
\cos\left(  2\theta\right)  \left\vert e_{j}\right\rangle \left\langle
e_{j}\right\vert \\
& \\
&  =e^{i\theta}I-2e^{-i\theta}\cos\left(  2\theta\right)  E_{j}%
\end{align*}

\bigskip\bigskip

In summary, we have:

\bigskip

\begin{corollary}
Let $t=e^{i\varphi}$ be an arbitrary point on the unit circle in the complex
plane. \ Let $b$ be a 3-stranded braid (i.e., a braid $b$ in $B_{3}$) given by
a braid word
\[
b=%
{\displaystyle\prod\limits_{k=1}^{L}}
b_{j\left(  k\right)  }^{\epsilon\left(  k\right)  }=b_{j(1)}^{\epsilon
(1)}b_{j(2)}^{\epsilon(2)}\cdots b_{j(L)}^{\epsilon(L)}\text{ , }%
\]
and let $K$ be the knot (or link) given by the closure $\overline{b}$ of the
braid $b$. \ Then the value of the Jones polynomial $V\left(  t\right)  $ of
$K$ at $t=e^{i\varphi}$ is given by
\[
V\left(  e^{i\varphi}\right)  =\left(  \left(  -e^{3i\theta}\right)
^{\sum_{k=1}^{L}\epsilon\left(  k\right)  }\right)  tr\left(
{\displaystyle\prod\limits_{k=1}^{L}}
\left(  U^{\left(  j\left(  k\right)  \right)  }\right)  ^{\epsilon\left(
k\right)  }\right)  +\left(  \delta-2\right)  \left(  \left(  -e^{4i\theta
}\right)  ^{\sum_{k=1}^{L}\epsilon\left(  k\right)  }\right)  \text{ ,}%
\]
where $U^{(j)}$ ($j=1,2$) is the linear transformation
\[
U^{(j)}=e^{i\theta}I-2e^{-i\theta}\cos\left(  2\theta\right)  E_{j}\text{ ,}%
\]
where $I$ denotes the $2\times2$ identity matrix, and where $E_{j}$ is the
$2\times2$ Hermitian matrix $\left\vert e_{j}\right\rangle \left\langle
e_{j}\right\vert $. \ We have also shown that the linear transformations
$U^{(1)}$, $U^{(2)}$, and $U=\prod_{k=1}^{L}\left(  U^{\left(  j\left(
k\right)  \right)  }\right)  ^{\epsilon\left(  k\right)  }$ are unitary if and
only if $\left\vert \varphi\right\vert \leq\frac{2\pi}{3}$. \ When $\left\vert
\varphi\right\vert \leq\frac{2\pi}{3}$, we will call $U^{(1)}$ and $U^{(2)}$
\textbf{elementary gates}.
\end{corollary}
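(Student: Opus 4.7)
The plan is to assemble the final corollary from three earlier pieces: Theorem~9.1 (which expresses $V(e^{i\varphi})$ as a trace), the fact that $\Phi_{e^{i\theta}}\circ J$ is a group homomorphism (so that a braid word maps to a matrix product), and the unitarity criterion already established in Corollary~8.2. No genuinely new computation is required; the work is in checking the change of variables and identifying $U^{(j)}$ explicitly.

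First, I would start from the displayed equation immediately preceding Theorem~9.1, which was obtained by combining $V(t)=(-A^{3})^{Writhe(b)}Tr_{M}(J(b))$ with the almost-commutative diagram of Proposition~7.3:
\[
V(e^{i\varphi}) \;=\; (-e^{3i\theta})^{Writhe(b)}\,tr\!\left((\Phi_{e^{i\theta}}\circ J)(b)\right)\;+\;(\delta-2)(-e^{4i\theta})^{Writhe(b)}.
\]
Since $Writhe(b)=\sum_{k=1}^{L}\epsilon(k)$ by the definition in Section~2, the scalar prefactors appearing in the corollary drop out verbatim.

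Second, because $\Phi_{e^{i\theta}}\circ J\colon B_{3}\longrightarrow End(\mathcal{H})$ is a group homomorphism, it distributes across the braid word:
\[
(\Phi_{e^{i\theta}}\circ J)(b)\;=\;\prod_{k=1}^{L}\bigl((\Phi_{e^{i\theta}}\circ J)(b_{j(k)})\bigr)^{\epsilon(k)}\;=\;\prod_{k=1}^{L}\bigl(U^{(j(k))}\bigr)^{\epsilon(k)}.
\]
To identify $U^{(j)}$, I use the Jones formula $J(b_{j})=A\cdot 1+A^{-1}U_{j}$ together with $\Phi_{e^{i\theta}}(A)=e^{i\theta}$ and $\Phi_{e^{i\theta}}(U_{j})=\delta E_{j}$ from Theorem~7.2, and then substitute $\delta=-2\cos(2\theta)$ (which is forced by $\alpha=e^{i\theta}$ and $\delta=-\alpha^{2}-\alpha^{-2}$). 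This yields $U^{(j)}=e^{i\theta}I-2e^{-i\theta}\cos(2\theta)E_{j}$, matching the statement.

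Finally, the unitarity claim is essentially Corollary~8.2 applied pointwise. That result asserts that $\Phi_{e^{i\theta}}\circ J$ is unitary iff $|\cos(2\theta)|\geq 1/2$, and Lemma~7.1 translates this, via $\alpha^{-4}=e^{i\varphi}$ (equivalently $\varphi=-4\theta\bmod 2\pi$), into the condition $|\varphi|\leq 2\pi/3$. Under this hypothesis $U^{(1)}$ and $U^{(2)}$ are unitary, hence so is any product of them and their inverses, confirming that $U=\prod_{k}(U^{(j(k))})^{\epsilon(k)}$ is unitary as well. The only real obstacle is bookkeeping the two substitutions $t=A^{-4}$ and $\delta=-\alpha^{2}-\alpha^{-2}=-2\cos(2\theta)$, and confirming via Lemma~7.1 that the five $\theta$-intervals arising in Corollary~8.2 collapse to the single interval $|\varphi|\leq 2\pi/3$ under $\varphi=-4\theta\bmod 2\pi$.
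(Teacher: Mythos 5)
Your proposal is correct and follows essentially the same route as the paper: it starts from the display obtained by combining the Jones trace formula with the almost-commutative diagram of the proposition on $\operatorname{eval}_{\alpha}\circ Tr_{M}$ versus $tr\circ\Phi_{\alpha}$, factors $(\Phi_{e^{i\theta}}\circ J)(b)$ over the braid word by the homomorphism property, computes $U^{(j)}=e^{i\theta}I-2e^{-i\theta}\cos(2\theta)E_{j}$ from $J(b_{j})=A\cdot 1+A^{-1}U_{j}$ with $\delta=-2\cos(2\theta)$, and settles unitarity by the earlier corollary together with the lemma relating $\alpha^{-4}$ to the arc $\left\vert\varphi\right\vert\leq 2\pi/3$. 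No substantive difference from the paper's argument.
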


\bigskip

\noindent\textbf{Remark.} \textit{Thus, the task of determining the value of
the Jones polynomial at any point }$t=e^{i\varphi}$\textit{such that
}$\left\vert \varphi\right\vert \leq\frac{2\pi}{3}$\textit{ reduces to the
task of devising a quantum algorithm that computes the trace of the unitary
transformation}%
\[
U=U(b)=\prod_{k=1}^{L}\left(  U^{\left(  j(k)\right)  }\right)  ^{\epsilon
(k)}\text{ \ .}%
\]

\bigskip

\begin{corollary}
Let $K$ be a \textbf{3-stranded knot} (or link),i.e., a knot (or link) given
by the closure $\overline{b}$ of a 3-stranded braid $b$, i.e., a braid $b\in
B_{3}$. \ \ Then the formula found in the previous corollary gives a
\textbf{deterministic classical algorithm} for computing the value of the
Jones polynomial of $K$ \textbf{at all points of the unit circle} in the
complex plane of the form $e^{i\varphi}$, where $\left\vert \varphi\right\vert
\leq\frac{2\pi}{3}$. \ Moreover, the time complexity of this algorithm is
$O\left(  L\right)  $, where $L$ is the length of the word $b$, i.e., where
$L$ is the number of crossings in the knot (or link) $K$. \ We will call this
algorithm the \textbf{classical 3-stranded braid (3-SB) algorithm.}
\end{corollary}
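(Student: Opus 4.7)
My plan is to exhibit the algorithm directly from the formula in the preceding corollary (Corollary~8.2) and then count the operations it performs. First, I would hard-code the two elementary $2\times 2$ gates $U^{(1)}$ and $U^{(2)}$ together with their inverses from the closed form
\[
U^{(j)}=e^{i\theta}I-2e^{-i\theta}\cos(2\theta)\,E_{j}\text{ ,}
\]
which depends only on the single angle $\theta$ (determined from $\varphi$ by $\varphi=-4\theta\bmod 2\pi$) and on the fixed vectors $\left\vert e_{1}\right\rangle ,\left\vert e_{2}\right\rangle $. This is an $O(1)$ precomputation that has to be done once, independently of $L$.

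Next, given the input braid word $b=\prod_{k=1}^{L}b_{j(k)}^{\epsilon(k)}$, I would scan it once from left to right, maintaining two running quantities: the accumulated matrix product
\[
M_{k}=\prod_{i=1}^{k}\left(U^{(j(i))}\right)^{\epsilon(i)}
\]
and the running writhe $W_{k}=\sum_{i=1}^{k}\epsilon(i)$. Updating $M_{k}$ from $M_{k-1}$ requires a single $2\times 2$ matrix multiplication ($O(1)$ in the uniform arithmetic cost model), and updating $W_{k}$ is a single integer addition ($O(1)$). After $L$ iterations, $M_{L}=U(b)$ and $W_{L}=\mathit{Writhe}(b)$.

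To finish, I would compute $tr(M_{L})$ in $O(1)$, evaluate the two scalar prefactors $(-e^{3i\theta})^{W_{L}}$ and $(-e^{4i\theta})^{W_{L}}$ in $O(1)$ (each is just $(-1)^{W_{L}}$ times a single complex exponential of a known angle), compute $\delta-2$ in $O(1)$, and finally assemble
\[
V(e^{i\varphi})=(-e^{3i\theta})^{W_{L}}\,tr(M_{L})+(\delta-2)(-e^{4i\theta})^{W_{L}}
\]
in $O(1)$ additional work. Correctness is immediate from Corollary~8.2, and the total operation count is $O(L)+O(1)=O(L)$, which is the claimed bound.

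The content here is essentially a matter of bookkeeping; the only genuinely nontrivial issue, and the place where I would be most careful, is the cost model. The statement tacitly assumes the uniform arithmetic model in which each complex arithmetic operation on entries of bounded precision counts as one step; this is the standard model for evaluating the Jones polynomial at a fixed point on the unit circle and matches the setup of the quantum algorithm stated in the abstract, so I would make this assumption explicit at the start of the proof. Under a bit-complexity model the entries of $M_{k}$ could in principle grow in precision with $k$, producing polylogarithmic factors, but these do not affect the linearity in $L$ and so the $O(L)$ bound is unchanged up to the choice of arithmetic model.
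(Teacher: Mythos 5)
Your proposal is correct and follows exactly the route the paper intends: the paper states this corollary without an explicit proof, treating it as immediate from the formula of the preceding corollary, and your left-to-right scan multiplying $2\times 2$ matrices while accumulating the writhe is precisely the fleshed-out version of that argument. Your explicit remark about the uniform arithmetic cost model is a reasonable clarification that the paper leaves tacit, but it does not change the substance.
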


\bigskip

\section{Trace estimation via the Hadamard test.}

\bigskip

In the past section, we have shown how to create the classical 3-SB algorithm
that computes the values of the Jones polynomial of a 3-stranded knot $K$ on
two thirds of the unit circle. \ In this section, we will show how to
transform this classical algorithm into a corresponding quantum algorithm.

\bigskip

\bigskip

We will now assume that $\left\vert \varphi\right\vert \leq\frac{2\pi}{3}$ so
that the elementary gates $U^{(1)}$and $U^{(2)}$, and also the gate
$U=U\left(  b\right)  =\prod_{k=1}^{L}\left(  U^{\left(  j\left(  k\right)
\right)  }\right)  ^{\epsilon\left(  k\right)  }$ are unitary. \ We know from
the previous section that all we need to do to create a quantum 3-SB algorithm
is to devise a quantum procedure for estimating the trace $trace\left(
U\right)  $ of the unitary transformation $U$. \ 

\bigskip

To accomplish this, we will use a trace estimation procedure called the
Hadamard test. \ 

\bigskip

Let $\mathcal{H}$ be the two dimensional Hilbert space associated with the the
unitary transformations $U$, and let $\left\{  \left\vert k\right\rangle
:k=0,1\right\}  $ be a corresponding chosen orthonormal basis. \ Moreover, let
$\mathcal{K}$ denote the two dimensional Hilbert space associated with an
ancillary qubit with chosen orthonormal basis $\left\{  \left\vert
0\right\rangle ,\left\vert 1\right\rangle \right\}  $. \ Then the trace
estimation procedure, called the \textbf{Hadamard test}, is essentially
defined by the two wiring diagrams found in Figures 9 and 10. \ The wiring
diagrams found in Figures 9 and 10 are two basic quantum algorithmic
primitives for determining respectively the real part $\operatorname{Re}%
\left(  trace(U)\right)  $ and the imaginary part $\operatorname{Im}\left(
trace(U)\right)  $ of the trace $trace\left(  U\right)  $ of $U$. \ The top
qubit in each of these wiring diagrams denotes the ancillary qubit, and the
bottom qubit $\left\vert k\right\rangle $ denotes a basis element of the
Hilbert space $\mathcal{H}$ associated with $U$. \ (The top wire in each
wiring diagram denotes the ancilla qubit.) \ Each box labeled by an `$H$'
denotes the Hadamard gate%
\[
H=\frac{1}{\sqrt{2}}\left(
\begin{array}
[c]{rr}%
1 & 1\\
1 & -1
\end{array}
\right)  \text{ .}%
\]
The box labeled by an `$S$' denotes the phase gate
\[
S=\left(
\begin{array}
[c]{cc}%
1 & 0\\
0 & i
\end{array}
\right)  \text{. }%
\]
And finally, the controlled-$U$ gate is given by the standard
notation.\bigskip%

\begin{center}
\includegraphics[
height=0.8916in,
width=3.7031in
]%
{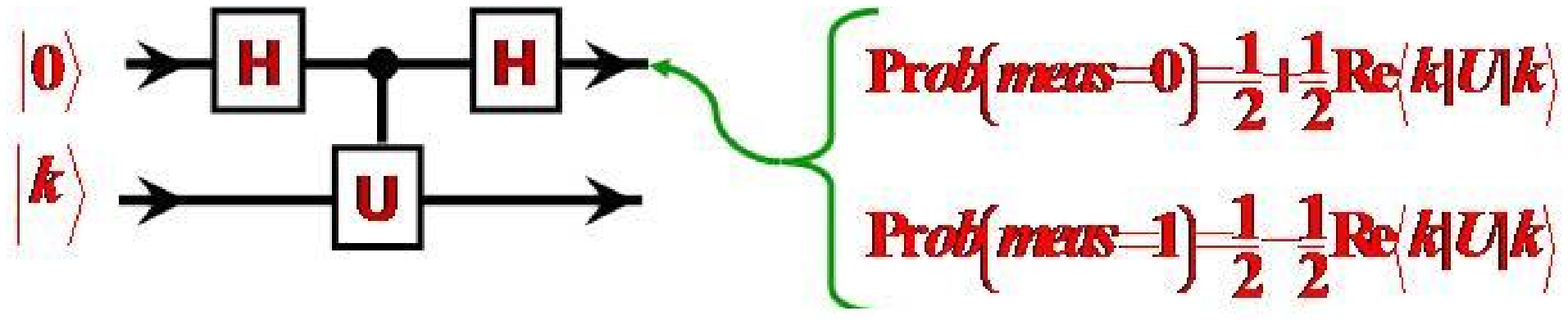}%
\\
Figure 9. \ A quantum system for computing the real part of the diagonal
element $U_{kk}$. $\operatorname{Re}\left(  U_{kk}\right)  =Prob\left(
meas=0\right)  -Prob\left(  meas=1\right)  $.
\end{center}
\bigskip%

\begin{center}
\includegraphics[
height=0.8536in,
width=3.7784in
]%
{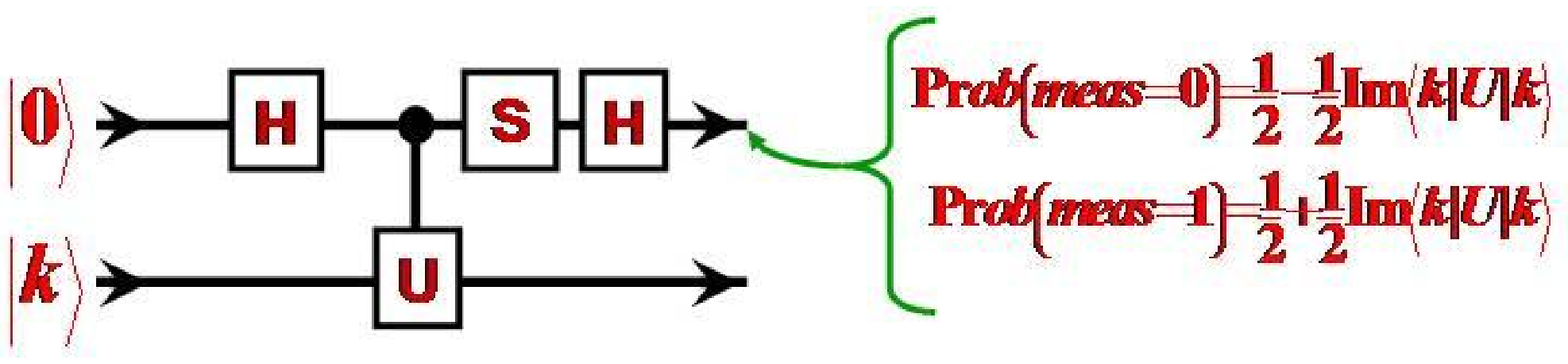}%
\\
Figure 10. \ A quantum system for computing the imaginary part of the diagonal
element $U_{kk}$. \ $\operatorname{Re}\left(  U_{kk}\right)  =Prob\left(
meas=1\right)  -Prob\left(  meas=0\right)  $.
\end{center}

\bigskip

\bigskip

The wiring diagram found in Figure 9 has been so designed as to compute the
real part $\operatorname{Re}\left(  U_{kk}\right)  $ of the $k$-th diagonal
entry $U_{kk}$ of $U$. \ For this wiring diagram has been so engineered that,
when the output ancilla qubit is measured, then the resulting measured $0$ or
$1$ occurs with probability given by
\[
\left\{
\begin{array}
[c]{ccc}%
Prob\left(  meas=0\right)  & = & \frac{1}{2}+\frac{1}{2}\operatorname{Re}%
\left\langle k|U|k\right\rangle \\
&  & \\
Prob\left(  meas=1\right)  & = & \frac{1}{2}-\frac{1}{2}\operatorname{Re}%
\left\langle k|U|k\right\rangle
\end{array}
\right.
\]
Thus, the difference of these two probabilities is the real part of the $k$-th
diagonal entry%
\[
Prob\left(  meas=0\right)  -Prob\left(  meas=1\right)  =\operatorname{Re}%
\left\langle k|U|k\right\rangle =\operatorname{Re}\left(  U_{kk}\right)
\text{ .}%
\]
If this procedure (i.e., preparation of the state $\left\vert 0\right\rangle
\left\vert k\right\rangle $, application of the unitary transformation
\ $\left(  H\otimes1\right)  \cdot Contr-U\cdot$\ $\left(  H\otimes1\right)
$, and measurement of the output ancilla qubit) is repeated $n$ times, then
the normalized number of $0$'s minus the number of $1$'s, i.e.,
\[
\frac{\#0\text{'s}-\#1\text{'s}}{n}\text{ ,}%
\]
becomes an ever better estimate of the real part $\operatorname{Re}\left(
U_{kk}\right)  $ of the $k$-th diagonal entry $U_{kk}$ as the number of trials
$n$ becomes larger and larger. \ We will make this statement even more precise later.

\bigskip

In like manner, the wiring diagram found in Figure 10 has been so designed to
compute the imaginary part $\operatorname{Im}\left(  U_{kk}\right)  $ of the
$k$-th diagonal entry $U_{kk}$ of $U$. \ This wiring diagram has been
engineered so that, if the output ancilla qubit is measured, then the
resulting measured $0$ and $1$ occur with probabilities given by
\[
\left\{
\begin{array}
[c]{ccc}%
Prob\left(  meas=0\right)  & = & \frac{1}{2}-\frac{1}{2}\operatorname{Re}%
\left\langle k|U|k\right\rangle \\
&  & \\
Prob\left(  meas=1\right)  & = & \frac{1}{2}+\frac{1}{2}\operatorname{Re}%
\left\langle k|U|k\right\rangle
\end{array}
\right.
\]
Thus, the difference of these two probabilities is the real part of the $k$-th
diagonal entry%
\[
Prob\left(  meas=1\right)  -Prob\left(  meas=0\right)  =\operatorname{Im}%
\left\langle k|U|k\right\rangle =\operatorname{Im}\left(  U_{kk}\right)
\]
Much like as before, if this procedure (i.e., preparation of the state
$\left\vert 0\right\rangle \left\vert k\right\rangle $, application of the
unitary transformation \ $\left(  H\otimes1\right)  \cdot Contr-U\cdot S\cdot
$\ $\left(  H\otimes1\right)  $, and measurement of the output ancilla qubit)
is repeated $n$ times, then the normalized number of $1$'s minus the number of
$0$'s, i.e.,
\[
\frac{\#1\text{'s}-\#0\text{'s}}{n}%
\]
becomes an ever better estimate of the imaginary part $\operatorname{Im}%
\left(  U_{kk}\right)  $ of the $k$-th diagonal entry $U_{kk}$ as the number
of trials $n$ increases.

\bigskip

We now focus entirely on the first wiring diagram, i.e., Figure 9. \ But all
that we will say can easily be rephrased for the second wiring diagram found
in Figure 10.

\bigskip

We continue by more formally reexpressing the wiring diagram of Figure 9 as
the quantum subroutine \textsc{QRe}$_{U}\left(  \ k\ \right)  $ given below:

\bigskip

\begin{center}
\fbox{\textsc{Quantum Subroutine} \textsc{QRe}$_{U}\left(  \ k\ \right)  $}
\end{center}

\bigskip

\begin{itemize}
\item[\fbox{\textbf{Step 0}}] Initialization%
\[
\left\vert \psi_{0}\right\rangle =\left\vert 0\right\rangle \left\vert
k\right\rangle
\]

\item[\fbox{\textbf{Step 1}}] Application of $H\otimes I$%
\[
\left\vert \psi_{1}\right\rangle =\left(  H\otimes I\right)  \left\vert
\psi_{0}\right\rangle =\frac{1}{\sqrt{2}}\left(  \left\vert 0\right\rangle
\left\vert k\right\rangle +\left\vert 1\right\rangle \left\vert k\right\rangle
\right)
\]

\item[\fbox{\textbf{Step 2}}] Application of \textsc{Contr-}$U$%
\[
\left\vert \psi_{2}\right\rangle =\left(  \text{\textsc{Contr-}}U\right)
\left\vert \psi_{1}\right\rangle =\frac{1}{\sqrt{2}}\left(  \left\vert
0\right\rangle \left\vert k\right\rangle +\left\vert 1\right\rangle
U\left\vert k\right\rangle \right)
\]

\item[\fbox{\textbf{Step 3}}] Application of $H\otimes I$%
\[
\left\vert \psi_{3}\right\rangle =\left(  H\otimes I\right)  \left\vert
\psi_{1}\right\rangle =\left\vert 0\right\rangle \left(  \frac{1+U}{2}\right)
\left\vert k\right\rangle +\left\vert 1\right\rangle \left(  \frac{1-U}%
{2}\right)  \left\vert k\right\rangle
\]

\item[\fbox{\textbf{Step 4}}] Measure the ancilla qubit%
\[%
\begin{tabular}
[c]{|c|c|c|}\hline
\textbf{Resulting Measured Bit }$\mathbf{b}$ & \textbf{Probability} &
\textbf{Resulting State} $\left\vert \mathbf{\psi}_{4}\right\rangle
$\\\hline\hline
$\mathbf{0}$ & $\frac{\mathbf{1}}{\mathbf{2}}\mathbf{+}\frac{\mathbf{1}%
}{\mathbf{2}}\operatorname{Re}\left(  \mathbf{U}_{\mathbf{kk}}\right)  $ &
$\frac{\left\vert \mathbf{0}\right\rangle \overset{}{\left(  \frac
{\mathbf{1+U}}{\mathbf{2}}\right)  }\left\vert \mathbf{k}\right\rangle
}{\underset{}{\sqrt{\frac{\mathbf{1}}{\mathbf{2}}\mathbf{+}\frac{\mathbf{1}%
}{\mathbf{2}}\operatorname{Re}\left\langle \mathbf{k}|\mathbf{U}%
|\mathbf{k}\right\rangle }}}$\\\hline
$\mathbf{1}$ & $\frac{\mathbf{1}}{\mathbf{2}}\mathbf{-}\frac{\mathbf{1}%
}{\mathbf{2}}\operatorname{Re}\left(  \mathbf{U}_{\mathbf{kk}}\right)  $ &
$\frac{\left\vert \mathbf{0}\right\rangle \overset{}{\left(  \frac
{\mathbf{1-U}}{\mathbf{2}}\right)  }\left\vert \mathbf{k}\right\rangle
}{\underset{}{\sqrt{\frac{\mathbf{1}}{\mathbf{2}}\mathbf{-}\frac{\mathbf{1}%
}{\mathbf{2}}\operatorname{Re}\left\langle \mathbf{k}|\mathbf{U}%
|\mathbf{k}\right\rangle }}}$\\\hline
\end{tabular}
\]

\item[\fbox{\textbf{Step 5}}] Output the classical bit $\mathbf{b}$ and
\textsc{Stop}
\end{itemize}

\bigskip

Next we formalize the iteration procedure by defining the following quantum subroutine:

\bigskip

\textsc{Approx-Re-Trace}$_{U}\left(  n\right)  $

\qquad\textsc{loop} $k=1..2$

\qquad\qquad\textsc{Approx-Diag-Entry}$(k)=0$

\qquad\qquad\textsc{loop} $\ j=1..n$

\qquad\qquad\qquad$b=$\textsc{QRe}$_{U}\left(  k\right)  $

\qquad\qquad\qquad\textsc{Approx-Diag-Entry}$(k)$ $=$
\textsc{Approx-Diag-Entry}$(k)+(-1)^{b}$

\qquad\qquad\textsc{end loop} $j$

\qquad\textsc{end loop} $k$

\qquad\textsc{Output} $\left(  \text{ \ \textsc{Approx-Diag-Entry}%
}(1)+\text{\textsc{Approx-Diag-Entry}}(2)\text{ }\right)  $

\textsc{End}

\bigskip

As mentioned earlier, quantum subroutines \textsc{QIm}$_{U}\left(  k\right)  $
and \textsc{Approx-Im-Trace}$_{U}\left(  n\right)  $ can be defined in a
similar manor.

\bigskip

We continue by recognizing that there is a certain amount of computational
effort involved in creating the subroutine \textsc{QRe}$_{U}\left(
\ k\ \right)  $. \ \ For this, we need the following formal definition:

\bigskip

\begin{definition}
The \textbf{compilation time} of a quantum algorithm is defined as the amount
of time (computational effort) required to assemble algorithm into hardware.
\end{definition}

\bigskip

Since the compilation time to assemble the gate $U$ is asymptotically the
number of elementary gates $U^{(j)}$ in the product $\prod_{k=1}^{L}\left(
U^{\left(  j\left(  k\right)  \right)  }\right)  ^{\epsilon\left(  k\right)
}$, we have

\bigskip

\begin{theorem}
Let $b$ be a 3-stranded braid, i.e., $b\in B_{3}$, and let $K$ be the knot (or
link) formed from the closure $\overline{b}$ of the braid $b$. \ Then the time
complexity of compiling the braid word $b$ into the quantum subroutine
\textsc{QRe}$_{U}(\quad)$ is $O\left(  L\right)  $, where $L$ is the length of
the braid word $b$, i.e., where $L$ is the number of crossings in the knot (or
link) $K$. \ Moreover, the running time complexity of \textsc{QRe}$_{U}%
(\quad)$ is also $O\left(  L\right)  $. \ The same is true for the quantum
subroutine \textsc{QIm}$_{U}(\quad)$.
\end{theorem}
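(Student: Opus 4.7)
The plan is to read off the complexity directly from the explicit gate decomposition of $U$ given in the preceding section, and observe that wrapping it in the Hadamard test adds only a constant amount of overhead.

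First I would recall that, by the corollary of the previous section, the unitary $U = U(b)$ factors as
\[
U = \prod_{k=1}^{L}\left(U^{(j(k))}\right)^{\epsilon(k)},
\]
where each $U^{(j)} = e^{i\theta}I - 2e^{-i\theta}\cos(2\theta)E_j$ is a fixed $2\times 2$ unitary determined only by $\theta$ and the choice $j\in\{1,2\}$. In particular, each elementary factor, and its inverse, can be assembled as a piece of hardware in $O(1)$ time, independent of $L$. Walking through the braid word $b = b_{j(1)}^{\epsilon(1)}\cdots b_{j(L)}^{\epsilon(L)}$ once and emitting, for each symbol $b_{j(k)}^{\epsilon(k)}$, the corresponding controlled elementary gate \textsc{Contr-}$(U^{(j(k))})^{\epsilon(k)}$ therefore produces the full \textsc{Contr-}$U$ circuit in $O(L)$ time. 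Prepending and appending the single-qubit gates $H\otimes I$ (and, in the imaginary-part version, the phase gate $S$) and appending a single-qubit measurement adds only $O(1)$ additional compilation cost. Hence the whole circuit \textsc{QRe}$_{U}(\ )$ is assembled in time $O(L)$.

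For the running time, the same decomposition gives the bound. Step $0$ (state preparation) and Steps $1$ and $3$ (Hadamard layers), together with the final measurement, each take $O(1)$ time. Step $2$ applies \textsc{Contr-}$U$, which by the factorisation above is the sequential execution of $L$ controlled elementary gates, each acting on only two qubits and therefore running in $O(1)$ time on the underlying quantum hardware. Summing gives an execution time of $O(L)$ for one invocation of \textsc{QRe}$_{U}(\ )$. The argument for \textsc{QIm}$_{U}(\ )$ is identical, the only difference being the insertion of the constant-cost phase gate $S$ between \textsc{Contr-}$U$ and the second Hadamard.

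The only point requiring any care, and what I would regard as the main (though minor) obstacle, is justifying that the elementary gates $U^{(j)}$ themselves have $O(1)$ compilation and running cost, since $\theta$ depends on the target point $e^{i\varphi}$. I would handle this by fixing $\varphi$ as part of the input to the algorithm and noting that $U^{(1)}$ and $U^{(2)}$ are then fixed $2\times 2$ unitaries whose hardware realisations have size independent of $L$; one is free to charge the cost of producing these two gates to the $O(1)$ preamble of the compilation. Once this is granted, both claims of the theorem follow immediately from the length-$L$ factorisation of $U$.
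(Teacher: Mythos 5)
Your proposal is correct and follows essentially the same route as the paper, which justifies the theorem with the single observation that the compilation time of the gate $U$ is asymptotically the number of elementary gates $U^{(j)}$ in the product $\prod_{k=1}^{L}\bigl(U^{(j(k))}\bigr)^{\epsilon(k)}$, namely $L$. Your version merely spells out the details the paper leaves implicit (the $O(1)$ cost of each elementary two-by-two gate and of the Hadamard-test wrapper), which is a reasonable elaboration rather than a different argument.
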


\bigskip

\begin{corollary}
The quantum subroutine \textsc{Approx-Re-Trace}$_{U}\left(  n\right)  $\ and
\textsc{Approx-Im-Trace}$_{U}\left(  n\right)  $ are each of compile time
complexity $O\left(  nL\right)  $ and of run time complexity $O\left(
nL\right)  $. \ 
\end{corollary}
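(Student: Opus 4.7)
The plan is to argue by a straightforward complexity accounting, leveraging Theorem 10.2 as the only nontrivial input. The strategy is to read off the loop structure of \textsc{Approx-Re-Trace}$_U(n)$ directly from its pseudocode and count how many times each primitive operation is invoked, then multiply by the per-invocation cost.

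First I would observe that \textsc{Approx-Re-Trace}$_U(n)$ consists of two nested loops: an outer loop over $k\in\{1,2\}$ and an inner loop over $j\in\{1,\dots,n\}$. Inside the innermost loop, the only nontrivial work is a single call to the quantum subroutine \textsc{QRe}$_U(k)$ together with $O(1)$ classical arithmetic (incrementing a counter by $(-1)^b$). Hence the total number of invocations of \textsc{QRe}$_U$ is exactly $2n$, and the total classical overhead outside those calls is $O(n)$.

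Next I would invoke Theorem 10.2, which gives both compilation time $O(L)$ and run time $O(L)$ for a single call to \textsc{QRe}$_U(\ )$. Multiplying the per-call cost by the $2n$ invocations gives $2n\cdot O(L) = O(nL)$ for both the compilation and the running time of \textsc{Approx-Re-Trace}$_U(n)$; the $O(n)$ classical bookkeeping is absorbed into $O(nL)$ since $L\geq 1$. The identical argument applies to \textsc{Approx-Im-Trace}$_U(n)$, using the parallel statement in Theorem 10.2 for the quantum subroutine \textsc{QIm}$_U(\ )$.

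There is no genuine obstacle here; the content of the corollary is essentially a bookkeeping consequence of Theorem 10.2. The only point worth flagging explicitly in the write-up is that the outer loop contributes only a constant factor of $2$ and is therefore asymptotically invisible, so the $O(nL)$ bound is tight in $n$ and in $L$ and matches the advertised execution complexity in the abstract.
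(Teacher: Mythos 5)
Your proposal is correct and is exactly the bookkeeping the paper intends: the corollary is stated without an explicit proof as an immediate consequence of the preceding theorem, and your count of $2n$ invocations of \textsc{QRe}$_{U}$ (respectively \textsc{QIm}$_{U}$), each costing $O\left(  L\right)  $, with the $O\left(  n\right)  $ classical overhead absorbed, is precisely the argument being left implicit. One could add that the circuit need only be compiled once, so the compilation cost is really $O\left(  L\right)  $ --- consistent with the abstract's claim --- and the stated $O\left(  nL\right)  $ is simply a valid, if loose, upper bound, which your per-invocation accounting also delivers.
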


\bigskip

\begin{theorem}
Let $b$ be a 3-stranded braid, i.e., $b\in B_{3}$, and let $K$ be the knot (or
link) formed from the closure $\overline{b}$ of the braid $b$. \ Let
$\epsilon_{1}$ and $\epsilon_{2}$ be to arbitrary chosen positive real numbers
such that $\epsilon_{2}\leq1$. \ Let $n$ be an integer such that
\[
n\geq\frac{\ln\left(  2/\epsilon_{2}\right)  }{\epsilon_{1}^{2}}\text{ .}%
\]
Then with time complexity $O\left(  nL\right)  $, the quantum algorithm
\textsc{Approx-Re-Trace}$_{U}\left(  n\right)  $ will produce a random real
number $S_{n}^{(\operatorname{Re})}$ such that
\[
\text{Prob}\left(
\begin{array}
[c]{c}%
\ \\
\end{array}
\left\vert S_{n}^{(\operatorname{Re})}-\operatorname{Re}\left(  \mathstrut
trace(U)\right)  \right\vert \geq\epsilon_{1}%
\begin{array}
[c]{c}%
\ \\
\end{array}
\right)  \ \leq\ \epsilon_{2}%
\]
In other words, the probability that \textsc{Approx-Re-Trace}$_{U}\left(
n\right)  $ will output a random real number $S_{n}$ within $\epsilon_{1}$ of
the real part $\operatorname{Re}\left(  trace(U)\right)  $ of the trace
$trace\left(  U\right)  $ is greater than $1-\epsilon_{2}$. \ The same is true
for the quantum subroutine \textsc{Approx-Im-Trace}$_{U}\left(  n\right)  $.
\end{theorem}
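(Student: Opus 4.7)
The plan is to decompose the algorithm's output as a sample mean of independent bounded random variables, identify its expectation with $\operatorname{Re}(\operatorname{trace}(U))$ by invoking the Hadamard-test analysis already established, and then apply Hoeffding's inequality to obtain the required tail bound. The time-complexity claim will then follow directly from the preceding corollary on the runtime of \textsc{Approx-Re-Trace}$_U(n)$.

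To set up the notation: for $k\in\{1,2\}$ and $j\in\{1,\ldots,n\}$, let $b_{k,j}$ denote the classical bit output of the $j$-th call of \textsc{QRe}$_U(k)$, and put $X_{k,j} = (-1)^{b_{k,j}}$. Because each call of \textsc{QRe}$_U$ is a fresh preparation/evolution/measurement cycle with an independent measurement outcome, the family $\{X_{k,j}\}$ is mutually independent with each $X_{k,j}\in\{-1,+1\}$. Take $S_n^{(\operatorname{Re})} = \frac{1}{n}\sum_{k=1}^{2}\sum_{j=1}^{n} X_{k,j}$ as the (normalized) output. From the probability table recorded in Step 4 of \textsc{QRe}$_U(k)$ one has $E[X_{k,j}] = \operatorname{Prob}(b=0)-\operatorname{Prob}(b=1) = \operatorname{Re}(U_{kk})$, and hence
\[
E\bigl[S_n^{(\operatorname{Re})}\bigr] \;=\; \operatorname{Re}(U_{11}) + \operatorname{Re}(U_{22}) \;=\; \operatorname{Re}(\operatorname{trace}(U)).
\]

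Next, $n\,S_n^{(\operatorname{Re})}$ is a sum of $2n$ independent random variables each bounded in $[-1,+1]$, so Hoeffding's inequality delivers a tail estimate of the form
\[
\operatorname{Prob}\!\left(\bigl|S_n^{(\operatorname{Re})} - \operatorname{Re}(\operatorname{trace}(U))\bigr| \geq \epsilon_1\right) \;\leq\; 2\exp\!\bigl(-c\,n\epsilon_1^{2}\bigr)
\]
for an explicit absolute constant $c>0$. Requiring the right-hand side to be at most $\epsilon_2$ and solving for $n$ yields a lower bound of the form $n \geq C\ln(2/\epsilon_2)/\epsilon_1^{2}$, which is precisely the hypothesis of the theorem up to the constant $C$. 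The runtime bound $O(nL)$ is then immediate from the preceding corollary.

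The main obstacle is only the careful bookkeeping of the constant in the Hoeffding exponent needed to match the precise form of the hypothesis on $n$; the conceptual content --- mutual independence of the trials, the mean identity furnished by the Hadamard-test analysis, and the standard concentration phenomenon for bounded independent sums --- is already in hand. The proof of the companion statement for \textsc{Approx-Im-Trace}$_U(n)$ is entirely analogous, using the circuit of Figure 10 and the corresponding probability table to identify $E[X_{k,j}]$ with $\operatorname{Im}(U_{kk})$.
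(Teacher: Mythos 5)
Your proposal matches the paper's proof essentially step for step: the paper likewise treats the $2n$ outcomes of \textsc{QRe}$_{U}(1)$ and \textsc{QRe}$_{U}(2)$ as independent $\pm1$-valued random variables, identifies the mean of the normalized sum with $\operatorname{Re}(U_{11})+\operatorname{Re}(U_{22})=\operatorname{Re}(trace(U))$ via the Step-4 probability table, and invokes Hoeffding's inequality to obtain a tail bound $2e^{-n\epsilon_{1}^{2}}$, which is at most $\epsilon_{2}$ once $n\geq\ln(2/\epsilon_{2})/\epsilon_{1}^{2}$. The one piece you defer --- the constant $c$ in the exponent --- is precisely where care is needed: for the estimator normalized by $n$ (so that its mean is the full trace rather than half of it), Hoeffding gives $2e^{-n\epsilon_{1}^{2}/4}$, so a careful bookkeeping actually yields the threshold $n\geq4\ln(2/\epsilon_{2})/\epsilon_{1}^{2}$; the paper reaches the stated constant by applying Hoeffding to the $2n$-sample average, so your instinct that the constant is the delicate point is well founded.
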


\begin{proof}
Let $X_{1},X_{2},\ldots,X_{n}$ be the $n$ random variables corresponding to
the $n$ random output bits resulting from $n$ independent executions of
\textsc{QRe}$_{U}\left(  1\right)  $, and in like manner, let $X_{n+1}%
,X_{n+2},\ldots,X_{2n}$ be the $n$ random variables corresponding to the $n$
random output bits resulting from $n$ independent executions of \textsc{QRe}%
$_{U}\left(  2\right)  $. \ 

\bigskip

Thus, each of the first $n$ random variables have the same probability
$p_{0}^{(1)}$ of being zero and the same probability $p_{1}^{(1)}$ of being
$1$. \ In like manner, the last $n$ of these random variables have the same
probabilities $p_{0}^{(2)}$ and $p_{1}^{(2)}$ of being $0$ or $1$,
respectively. \ Moreover, it is important to emphasize that the $2n$ random
variables $X_{1},X_{2},\ldots,X_{n}$, $X_{n+1},X_{n+2},\ldots,X_{2n}$ are
stochastically independent.

\bigskip

The random variable associated with the random number
\[
\frac{\#0^{\prime}s-\#1^{\prime}s}{n}%
\]
is%
\[
S_{n}^{(\operatorname{Re})}=\sum_{j=1}^{2n}\left(  -1\right)  ^{X_{j}}\text{
.}%
\]
The reader can easily verify the mean $\mu$ of $S_{n}$ is given by
\[
\mu=p_{0}^{(1)}-p_{1}^{(1)}+p_{0}^{(2)}-p_{1}^{(2)}=\operatorname{Re}\left(
U_{11}\right)  +\operatorname{Re}\left(  U_{22}\right)  =\operatorname{Re}%
\left(  trace\left(  U)\right)  \right)
\]

From Hoeffding's inequality\cite[Theorem 2, page 16]{Hoeffding1}, it follows
that%
\[
Prob\left(
\begin{array}
[c]{c}%
\ \\
\end{array}
\left\vert S_{n}^{(\operatorname{Re})}-\operatorname{Re}\left(
trace(U)\right)  \right\vert \geq\epsilon_{1}%
\begin{array}
[c]{c}%
\ \\
\end{array}
\right)  \leq2e^{-2(2N)^{2}\epsilon_{1}^{2}/\left(  \sum_{j=1}^{2n}4\right)
}=2e^{-n\epsilon_{1}^{2}\text{ \ .}}%
\]
Thus, when
\[
n\geq\frac{\ln\left(  2/\epsilon_{2}\right)  }{\epsilon_{1}^{2}}\text{ , }%
\]
we have that%
\[
Prob\left(  \ \left\vert S_{n}^{(\operatorname{Re})}-\operatorname{Re}\left(
trace(U)\right)  \right\vert \geq\epsilon_{1}\ \right)  \leq\epsilon_{2}\text{
\ .}%
\]

In like manner, a similar result can be proved for \textsc{QIm}$_{U}$.
\end{proof}

\bigskip

As a corollary, we have

\bigskip

\begin{corollary}
Let $b$ be a 3-stranded braid, i.e., $b\in B_{3}$, and let $K$ be the knot (or
link) formed from the closure $\overline{b}$ of the braid $b$. \ Let
$\epsilon_{1}$ and $\epsilon_{2}$ be to arbitrary chosen positive real numbers
such that $\epsilon_{2}\leq1$. \ Let $n$ be an integer such that
\[
n\geq\frac{\ln\left(  4/\epsilon_{2}\right)  }{2\epsilon_{1}^{2}}\text{ .}%
\]
Then with time complexity $O\left(  nL\right)  $, the quantum algorithms
\textsc{Approx-Re-Trace}$_{U}\left(  n\right)  $ and \textsc{Approx-Im-Trace}%
$_{U}\left(  n\right)  $ will jointly produce random real numbers
$S_{n}^{(\operatorname{Re})}$ and $S_{n}^{(\operatorname{Im})}$such that
\[
\text{Prob}\left(
\begin{array}
[c]{c}%
\ \\
\end{array}
\left\vert S_{n}^{(\operatorname{Re})}-\operatorname{Re}\left(  \mathstrut
trace(U)\right)  \right\vert \geq\epsilon_{1}\text{ and }\left\vert
S_{n}^{(\operatorname{Im})}-\operatorname{Im}\left(  \mathstrut
trace(U)\right)  \right\vert \geq\epsilon_{1}%
\begin{array}
[c]{c}%
\ \\
\end{array}
\right)  \ \leq\ \epsilon_{2}%
\]
In other words, the probability that both \textsc{Approx-Re-Trace}$_{U}\left(
n\right)  $ and \textsc{Approx-Im-Trace}$_{U}\left(  n\right)  $ will output
respectively random real number $S_{n}^{\left(  \operatorname{Re}\right)  }$
and $S_{n}^{\left(  \operatorname{Im}\right)  }$ within $\epsilon_{1}$ of the
real and imaginary parts of the trace $trace\left(  U\right)  $ is greater
than $1-\epsilon_{2}$. \ 
\end{corollary}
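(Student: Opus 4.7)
The plan is to apply Theorem 9.5 separately to the two subroutines and then combine the two conclusions via a union bound. First I would observe that the $2n$ quantum experiments underlying \textsc{Approx-Re-Trace}$_{U}(n)$ are carried out independently of the $2n$ quantum experiments underlying \textsc{Approx-Im-Trace}$_{U}(n)$, so the random variables $S_{n}^{(\mathrm{Re})}$ and $S_{n}^{(\mathrm{Im})}$ are stochastically independent and each may be analyzed using the Hoeffding argument already developed in the proof of Theorem 9.5.

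Next, I would apply Theorem 9.5 with the failure-probability parameter rescaled from $\epsilon_{2}$ to $\epsilon_{2}/2$, once to each of the two subroutines. Substituting $\epsilon_{2}/2$ for $\epsilon_{2}$ in the sample-size hypothesis of Theorem 9.5 produces the requirement $n \ge \ln(4/\epsilon_{2})/\epsilon_{1}^{2}$, which matches (up to the factor of $2$ in the denominator asserted by the corollary) the hypothesis in the statement. From Theorem 9.5 we then obtain
\[
\mathrm{Prob}\bigl(\,|S_{n}^{(\mathrm{Re})}-\mathrm{Re}(\mathrm{trace}(U))|\ge\epsilon_{1}\,\bigr)\ \le\ \epsilon_{2}/2,
\]
together with the analogous inequality for the imaginary part.

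The third step is the union bound. The event that at least one of the two estimators fails to approximate its target to within $\epsilon_{1}$ is contained in the union of the two individual failure events; hence its probability is at most $\epsilon_{2}/2 + \epsilon_{2}/2 = \epsilon_{2}$. Equivalently, the probability that both coordinates are simultaneously within $\epsilon_{1}$ of $(\mathrm{Re}(\mathrm{trace}(U)),\mathrm{Im}(\mathrm{trace}(U)))$ is at least $1-\epsilon_{2}$, which is exactly the joint guarantee asserted by the corollary. The time-complexity claim is then immediate from Corollary 9.4: each of \textsc{Approx-Re-Trace}$_{U}(n)$ and \textsc{Approx-Im-Trace}$_{U}(n)$ runs in time $O(nL)$, so performing them in sequence still runs in $O(nL)$.

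The only delicate point, and the main obstacle in writing the argument carefully, is bookkeeping of the constant factor in front of $1/\epsilon_{1}^{2}$ coming out of Hoeffding's inequality after the substitution $\epsilon_{2}\mapsto\epsilon_{2}/2$; one must verify that the bound $n\ge \ln(4/\epsilon_{2})/(2\epsilon_{1}^{2})$ in the corollary is strong enough to drive both individual failure probabilities below $\epsilon_{2}/2$. Modulo that constant check, the proof is a routine union-bound packaging of the two halves of Theorem 9.5, using the stochastic independence of the two sets of quantum trials.
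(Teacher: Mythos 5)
Your overall architecture (independence of the two runs, apply Theorem 9.5 twice with a rescaled failure parameter, then combine) is reasonable, but the ``constant check'' you defer at the end is exactly where the argument breaks, and it does not go through as you have set it up. Substituting $\epsilon_{2}/2$ into Theorem 9.5 requires $n\geq\ln\left(  4/\epsilon_{2}\right)  /\epsilon_{1}^{2}$, whereas the corollary only supplies $n\geq\ln\left(  4/\epsilon_{2}\right)  /\left(  2\epsilon_{1}^{2}\right)  $. With that weaker $n$, the Hoeffding bound $2e^{-n\epsilon_{1}^{2}}$ from the proof of Theorem 9.5 only gives each individual failure probability at most $2e^{-\frac{1}{2}\ln\left(  4/\epsilon_{2}\right)  }=\sqrt{\epsilon_{2}}$, which is $\geq\epsilon_{2}/2$ for every $\epsilon_{2}\leq1$; your union bound then yields only $2\sqrt{\epsilon_{2}}$, which is not $\leq\epsilon_{2}$ in general. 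So the proof as written does not establish the corollary under its stated hypothesis.

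The stated hypothesis is in fact calibrated for a product, not a union: observe that $\ln\left(  4/\epsilon_{2}\right)  /\left(  2\epsilon_{1}^{2}\right)  =\ln\left(  2/\sqrt{\epsilon_{2}}\right)  /\epsilon_{1}^{2}$, i.e., the corollary's $n$ is exactly Theorem 9.5 applied with $\epsilon_{2}\longmapsto\sqrt{\epsilon_{2}}$. Each individual failure probability is therefore at most $\sqrt{\epsilon_{2}}$, and the stochastic independence of the two sets of trials that you already observed gives
\[
\text{Prob}\Bigl(\,\bigl\vert S_{n}^{(\operatorname{Re})}-\operatorname{Re}\left(  trace(U)\right)  \bigr\vert \geq\epsilon_{1}\text{ and }\bigl\vert S_{n}^{(\operatorname{Im})}-\operatorname{Im}\left(  trace(U)\right)  \bigr\vert \geq\epsilon_{1}\Bigr)\ \leq\ \sqrt{\epsilon_{2}}\cdot\sqrt{\epsilon_{2}}\ =\ \epsilon_{2}\text{ ,}
\]
which is precisely the displayed inequality of the corollary --- note that it bounds the probability that \emph{both} estimates fail (an ``and''), not that at least one fails. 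Be aware that the corollary's informal restatement, namely that both estimates succeed with probability at least $1-\epsilon_{2}$, is the union-bound conclusion you were aiming for and genuinely requires the larger $n\geq\ln\left(  4/\epsilon_{2}\right)  /\epsilon_{1}^{2}$; the formal display and the informal gloss are not equivalent, and only the former follows from the hypothesis as stated. Your time-complexity observation via Corollary 9.4 is fine.
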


\bigskip

\section{Summary and Conclusion}

\bigskip

Let $K$ be a 3-stranded knot (or link), i.e., a knot formed by the closure
$\overline{b}$ of a 3-stranded braid $b$, i.e., a braid $b\in B_{3}$. \ Let
$L$ be the length of the braid word $b$, i.e., the number of crossings in the
knot (or link) $K$. \ Let $\epsilon_{1}$ and $\epsilon_{2}$ be two positive
real numbers such that $\epsilon_{2}\leq1$. \ 

\bigskip

Then in summary, we have created two algorithms for computing the value of the
Jones polynomial $V_{K}\left(  t\right)  $ at all points $t=e^{i\varphi}$ of
the unit circle in the complex plane such that $\left\vert \varphi\right\vert
\leq\frac{2\pi}{3}$. \ 

\bigskip

The first algorithm, called the \textbf{classical 3-SB algorithm}, is a
classical deterministic algorithm that has time complexity $O\left(  L\right)
$. \ The second, called the \textbf{quantum 3-SB algorithm}, is a quantum
algorithm that computes an estimate of $V_{K}\left(  e^{i\varphi}\right)  $
within a precision of $\epsilon_{1}$ with a probability of success bounded
below by $1-\epsilon_{2}$. \ The \textbf{execution time complexity} of this
algorithm is $O\left(  nL\right)  $, where $n$ is the ceiling function of
$\ \frac{\ln\left(  4/\epsilon_{2}\right)  }{2\epsilon_{1}^{2}}$. \ The
\textbf{compilation time complexity}, i.e., an asymptotic measure of the
amount of time to assemble the hardware that executes the algorithm, is
$O\left(  L\right)  $. \ A pseudo code description of the quantum 3-stranded
braid algorithm is given below.

\bigskip

\textsc{Quantum-3-SB-Algorithm}$\left(  b,\varphi,\epsilon_{1},\epsilon
_{2}\right)  $

\qquad\textsc{Comment:} $b=$ braid word representing a 3-stranded braid s.t.
$K=\overline{b}$

\qquad\textsc{Comment:} $\varphi$ real number s.t. $\left\vert \varphi
\right\vert \leq\frac{2\pi}{3}$

\qquad\textsc{Comment:} $\epsilon_{1}$ lower bound on the precision of the output

\qquad\textsc{Comment:} $\epsilon_{2}$ upper bound on the probability that the

\qquad\textsc{Comment: \ }output is not within precision $\epsilon_{1}$

\qquad\textsc{Comment:} The output of this algorithm is with probability
$\geq1-\epsilon_{2}$

\qquad\textsc{Comment: \ }a complex number within $\epsilon_{1}$ of
$V_{K}\left(  e^{i\varphi}\right)  $

\qquad

\qquad$n=\left\lceil \frac{\ln\left(  4/\epsilon_{2}\right)  }{2\epsilon
_{1}^{2}}\right\rceil $

$\qquad U=$ \ \textsc{Gate-Compile}$\left(  b\right)  $

\qquad\textsc{Approx-Re-Trace}$_{U}=$ \ \textsc{Real-Part-Trace-Compile}%
$\left(  U\right)  $

\qquad\textsc{Approx-Im-Trace}$_{U}=$ \ \textsc{Imaginary-Part-Trace-Compile}%
$\left(  U\right)  $

\qquad\textsc{Approx}ReTr $=$ \textsc{Approx-Re-Trace}$_{U}\left(  n\right)  $

\qquad\textsc{Approx}ImTr $=$ \textsc{Approx-Im-Trace}$_{U}\left(  n\right)  $

\qquad$W=$ \ $Writhe\left(  b\right)  $

\qquad$\theta=-\varphi/4$

\qquad$\delta=-2\ast\cos\left(  2\ast\theta\right)  $

\qquad\textsc{ReExp3} $=\cos\left(  3\ast\theta\ast W\right)  $

\qquad\textsc{ImExp3} $=\sin\left(  3\ast\theta\ast W\right)  $

\qquad\textsc{ReJones} $=$\textsc{ReExp3}$\ast$\textsc{ApproxReTr}%
$-$\textsc{ImExp3} $\ast$ \textsc{ApproxImTr}

\qquad\textsc{ReJones} $=\left(  -1\right)  ^{W}\ast\left(
\begin{array}
[c]{c}%
\ \\
\
\end{array}
\text{\textsc{ReJones}}+\left(  \delta-2\right)  \ast\cos\left(  \varphi\ast
W\right)
\begin{array}
[c]{c}%
\ \\
\
\end{array}
\right)  $

\qquad\textsc{ImJones} $=$\textsc{ImExp3}$\ast$\textsc{ApproxReTr}%
$+$\textsc{ReExp3} $\ast$ \textsc{ApproxImTr}

\qquad\textsc{ImJones}$\ =\left(  -1\right)  ^{W}\ast$ $\left(
\begin{array}
[c]{c}%
\ \\
\
\end{array}
\text{\textsc{ImJones}}-\left(  \delta-2\right)  \ast\sin\left(  \varphi\ast
W\right)
\begin{array}
[c]{c}%
\ \\
\
\end{array}
\right)  $

\qquad\textsc{Output(} \ \textsc{ReJones}, \ \textsc{ImJones} \ \textsc{)}

\textsc{End}

\bigskip

\noindent\textbf{Acknowledgements.} \ This work was partially supported by the
National Science Foundation under NSF Grant DMS-0245588. \ The authors would
like to thank David Cornwell and John Myers for their helpful suggestions and discussions.

\bigskip

\end{document}